\documentclass[5p]{elsarticle}
\makeatletter
\def\ps@pprintTitle{%
 \let\@oddhead\@empty
 \let\@evenhead\@empty
 \def\@oddfoot{}%
 \let\@evenfoot\@oddfoot}
\makeatother


\usepackage{amsmath,amssymb,amsthm}
\usepackage{todonotes}

\usepackage[shortlabels]{enumitem}
\usepackage{url}

\newcommand{\real}{\ensuremath{\mathbb{R}}}

\newcommand{\smat}[1]{\ensuremath{\left[\begin{smallmatrix}#1\end{smallmatrix}\right]}}

\newcommand{\Qh}{\ensuremath{Q^{1/2}}}



\newtheorem{remark}{Remark}
\newtheorem{lemma}{Lemma}
\newtheorem{fact}{Fact}
\newtheorem{assumption}{Assumption}
\newtheorem{proposition}{Proposition}
\newtheorem{corollary}{Corollary}
\newtheorem{problem}{Problem}

\graphicspath{{./figures/}}

\allowdisplaybreaks

\begin{document}

\title{Data-based stabilization of unknown bilinear systems\\
with guaranteed basin of attraction\tnoteref{t1}}
\tnotetext[t1]{This research is partially supported by a Marie Sk\l{}odowska-Curie COFUND grant, no.~754315.}
\author[1]{Andrea Bisoffi\corref{cor1}%
}
\ead{a.bisoffi@rug.nl}
\author[1]{Claudio De Persis}
\ead{c.de.persis@rug.nl}
\author[2]{Pietro Tesi}
\ead{pietro.tesi@unifi.it}

\cortext[cor1]{Corresponding author}
\address[1]{ENTEG and the J.C. Willems Center for Systems and Control, University of Groningen, 9747 AG Groningen, The Netherlands}
\address[2]{DINFO, University of Florence, 50139 Florence, Italy}

\begin{abstract}
Motivated by the goal of having a building block in the direct design of data-driven controllers for nonlinear systems, we show how, 
for an unknown discrete-time bilinear system, the data collected in an offline open-loop experiment enable us to design a 
feedback controller and provide a guaranteed under-approximation of its basin of attraction. Both can be obtained by solving a linear matrix inequality for a fixed scalar parameter, and possibly iterating on different values of that parameter.
The results of this data-based approach are compared with the ideal case when the model is known perfectly.
\end{abstract}

\maketitle

\section{Introduction}
\label{sec:intro}

Direct data-driven control design aims at obtaining a control law based only on input-output data collected from the system during an experiment, thereby avoiding altogether the identification of a model of the system from the data.
Within the literature of direct data-driven control, notable approaches are iterative feedback tuning \cite{Hjalmarsson1998}, virtual-reference feedback tuning \cite{campi2002virtual}, iterative correlation-based tuning \cite{karimi2004iterative} to name a few.
When data are assumed to be generated by an underlying linear system, a number of approaches 
are available \cite{Salvador2018,Goncalves2019,baggioCSL19,recht2018tour,
deptCDC,Waarde2020}. However, sensibly fewer address the case when data are generated by an underlying nonlinear system, see \cite{fliess2009model,campi2006direct,D2IBC}.

To address the intrinsic difficulty of dealing with the control design of unknown nonlinear systems, a natural approach is to reduce their complexity by considering the system evolution along a given Lyapunov function. This classical control theoretic analysis is enhanced by nonparametric regression methods from machine learning to cope with the large uncertainty in the model \cite{Berkenkamp2017} and is performed using a sufficiently dense set of samples taken from the system. Analytical guarantees of stability and safety are then obtained relying on additional tools from robust control and optimization \cite{Wabersich2018ecc}. The approach of~\cite{Tabuada18data-driven,fliess2009model} to reduce the complexity of controlling unknown nonlinear systems consists of considering systems with a well-defined relative degree, in such a way that the uncertainty only appears in the form of two Lie derivatives of the output function along the system vector fields. Once the dynamics has been discretized, the key observation from sampled-data control theory is that these uncertain functions are constant between sampling times  for a sufficiently high sampling rate.

A different approach to data-driven control of nonlinear systems has been recently taken in a series of works that use the nonparametric representation of dynamical systems via Hankel matrices of finite-size input-output data proposed in~\cite{willems2005note}. On one hand, this representation has given rise to data-enabled predictive controllers where the effect of the nonlinearity is taken into account by a regularized optimization problem \cite{Coulson2018,Huang2019data}. On the other hand,  it inspired a data-dependent parametrization of the closed-loop system that reduces the control design to semidefinite programs  where the nonlinearity is dealt with as a process disturbance \cite{depersis-tesi2020tac}. Further results along this research thread have been proposed in \cite{berberich2019robust}. While these results make possible to deal with nonlinear systems, they provide local stability results. Very recently, within the research thread of~\cite[Thm.~1]{willems2005note}, there have been efforts to go beyond the local nature of the results for special classes of nonlinear systems, studying data-driven control of second-order discrete Volterra systems \cite{schiffer2020data} and polynomial systems \cite{guo20cdc}.

The goal of this paper is to characterize another notable class of nonlinear systems for which nonlocal data-driven control results can be established, namely 
bilinear systems.
The reason for focusing on bilinear systems is threefold.
In spite of their simple nonlinear structure, applying Carleman linearization to a generic continuous-time input-affine nonlinear system yields a 
continuous-time bilinear system with a larger state plus a remainder (see \cite{brockett1976volterra,krener1975bilinear}), so bilinear systems can be used as universal approximators of input-affine nonlinear systems \cite[p.~110]{rugh1981nonlinear}. This last consideration specifically motivates
the proposed data-driven control scheme for bilinear systems, which is envisioned to be a building block in future work on direct data-driven control of input-affine nonlinear systems (see also the discussion in Remark~\ref{rem:univ approx}).
A second motivation is to provide a method alternative to sum-of-squares programming for polynomial control systems \cite{guo20cdc} to directly design data-driven controllers of bilinear systems. 
Finally, bilinear systems are interesting {\it per se} as meaningful models for a number of relevant applications in engineering, biology and ecology \cite{mohler1973bilinear,bruni1974bilinear}.

Many model-based approaches have been proposed for control of  bilinear systems such as \cite{bitsoris2008ifac,tarbouriech2009control,Amato200c&sII,Khlebnikov2018arc}, and we refer the reader to \cite[\S 1]{Khlebnikov2018arc} for a thorough overview. 
Such model-based approaches assume the knowledge of the parameters of the bilinear system. When these are not known from first-principles considerations, one can resort to system identification techniques tailored for bilinear systems, and then apply one of the model-based approaches above. Some of these \emph{indirect} data-driven methods for system identification are \cite{favoreel1999subspace,chen2000subspace,
sontag2009input}, see also \cite[Part II]{verdult2002non} for an overview.
Although combining the aforementioned system identification techniques with model-based design constitutes a natural and valid way to control a bilinear system, we aim here at exploring the less-investigated direct control design of a bilinear system based on data (avoiding altogether a system identification step generally nontrivial in a nonlinear setting). 
We show that under mild assumptions (see Assumption~\ref{ass:matrixNormD} below), 
it is indeed possible to design stabilizing control policies directly from data.  
We also show via simulations that our 
approach compares well with a model-based design that has \emph{perfect} knowledge of the parameters of the system, regardless of whether this knowledge derives from first-principles considerations or from a preliminary system identification step.

In the case of data generated by an underlying linear system, the fundamental result \cite[Thm.~1]{willems2005note} has been shown in \cite{depersis-tesi2020tac} to allow direct data-driven design of (optimal) feedback controllers (with robustness to noise) for linear systems through linear matrix inequalities (LMI) \cite{boyd1994linear} and the \emph{local} stabilization of nonlinear systems through semidefinite programs. In the case of data generated by an underlying bilinear system, the arguments in \cite{depersis-tesi2020tac} need substantial modifications to counteract the nonlinear term appearing in the bilinear system and to explicitly provide an estimate of the region of attraction. Thus, we need to resort
to tools from robust control (such as \cite{petersen1987stabilization,Khlebnikov2018arc}, see Fact~\ref{fact:petersen lemma} below) besides more standard ones from linear matrix inequalities. 
Some conservatism is introduced in these steps compared to a model-based approach, as illustrated in Section~\ref{sec:numerical example}.

Similarly to the model-based approaches \cite{Khlebnikov2018arc,tarbouriech2009control} and, partially, to \cite{bitsoris2008ifac,Amato200c&sII}, we also adopt a linear state feedback and a quadratic Lyapunov function in the design of the closed-loop system. Alternatives are based on rational polynomial controllers and sum-of-squares programming \cite{Vatani2017ijrnc}. The  choice of linear controllers is restrictive compared to nonlinear state feedback (and the actual basin of attraction has not an ellipsoidal shape), but are dictated by the the desire of obtaining 
a computationally tractable result in the form of linear matrix inequalities (after fixing a scalar parameter). 
However, the main difference with those model-based approaches is that we design here the linear state feedback and the quadratic Lyapunov function \emph{without} relying on the knowledge of the bilinear system matrices, which we aim to substitute instead through data collected from the bilinear system.

Tuning a feedback controller based only on a limited number of open-loop data, which gives a guaranteed subset of the basin of attraction for a bilinear system, is the main contribution of this paper. 

\textbf{Structure.} The considered problem is formulated in Section~\ref{sec:description and problem}. In Section~\ref{sec:data-based} we provide our data-based controller for the unknown bilinear system with a guaranteed under-approximation of its basin of attraction, as a main result. Section~\ref{sec:numerical example} compares this solution with a model-based one on a numerical example.

\textbf{Notation.} For a matrix $A$, $\| A \|$ denotes the induced 2-norm.
For a symmetric matrix $\smat{A & B\\ B^\top & C}$, we may use the shorthand writing $\smat{A & B\\ \star & C}$. $I$ denotes an identity matrix of appropriate dimensions.

\section{System description and problem formulation}
\label{sec:description and problem}

Consider the discrete-time bilinear system
\begin{equation}
\label{bil.sys}
x^+ = A x + B u + D x u
\end{equation}
where $x \in \mathbb{R}^n$ is the state, $u \in \mathbb{R}$ is the input, and the system matrices have dimensions $A\in \mathbb{R}^{n\times n}$, $B\in \mathbb{R}^{n}$, $D\in \mathbb{R}^{n\times n}$. 
Our choice to consider a \emph{scalar} input in~\eqref{bil.sys} is motivated in Remark~\ref{rem:why scalar input} after we have outlined our approach.
The matrices $A$, $B$, $D$ are completely unknown apart from a bound on the matrix norm of $D$ as follows.

\begin{assumption}
\label{ass:matrixNormD}
The matrix $D$ satisfies $\| D \| \le \delta$ (equivalently, $D^\top D \preceq \delta^2 I$)
for some known $\delta>0$.
\end{assumption}

Our objective is to design a controller $u=K x$ for the bilinear system in~\eqref{bil.sys} based only on data collected from an off-line experiment (namely, without identifying the matrices $A$, $B$, $D$) and give a guaranteed under-approximation of the basin of attraction of the origin for the closed-loop system. The off-line experiment of duration $T$ (with $T>0$) collects the input and state sequences $u(0), u(1), \dots, u(T-1)$ and $x(0), x(1), \dots, x(T)$. These are organized as
\begin{subequations}
\label{data}
\begin{align}
U_{0,T} & := 
\begin{bmatrix}
u(0) & u(1) & \ldots & u(T-1)
\end{bmatrix} \label{U0}\\
X_{0,T} & := 
\begin{bmatrix}
x(0) & x(1) & \ldots & x(T-1)
\end{bmatrix} \label{X0}\\
X_{1,T} & := 
\begin{bmatrix} 
x(1) & x(2) & \ldots & x(T)
\end{bmatrix}, \label{X1}
\end{align}
and allow computing the auxiliary quantity
\begin{equation}
V_{0,T}:=
\begin{bmatrix}
x(0) u(0) & x(1) u(1) & \ldots & x(T-1) u(T-1)
\end{bmatrix}. \label{V0}
\end{equation}
\end{subequations}
Following \cite{depersis-tesi2020tac}, we reparametrize the gain $K$ by a matrix $G_K$ and give in the next lemma an equivalent representation of~\eqref{bil.sys} in closed loop with $u = K x$, which depends on data, except for the matrix $D$.
\begin{lemma}
\label{lem1}
Let $G_K \in \real^{T\times n}$ satisfy
\begin{equation}
\label{I=X0 GK}
I=X_{0,T} G_K.
\end{equation}
Then, system \eqref{bil.sys} with state feedback $u=Kx$ and $K= U_{0,T} G_K$ has the equivalent representation 
\begin{equation}
\label{bil.sys.data}
x^+ = (X_{1,T} - D V_{0,T} + Dx U_{0,T}) G_K x =: g_D(x) x.
\end{equation}
\end{lemma}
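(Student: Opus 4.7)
The proof should be a direct algebraic verification; the key is to write down the fundamental identity that links the data matrices to the unknown parameters. My plan is to apply \eqref{bil.sys} column-wise along the recorded trajectory: the $i$-th column of $X_{1,T}$ is $x(i+1) = A x(i) + B u(i) + D x(i) u(i)$, so, recalling the definition of $V_{0,T}$, one obtains the matrix identity
\begin{equation*}
X_{1,T} = A X_{0,T} + B U_{0,T} + D V_{0,T}.
\end{equation*}
This is the only step where the dynamics are used; everything else is substitution.

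Next I would rearrange as $X_{1,T} - D V_{0,T} = A X_{0,T} + B U_{0,T}$ and right-multiply by $G_K$. Using the two hypotheses $X_{0,T} G_K = I$ and $U_{0,T} G_K = K$, the right-hand side collapses to $A + BK$, yielding
\begin{equation*}
(X_{1,T} - D V_{0,T}) G_K = A + BK.
\end{equation*}
Adding the remaining term, $D x\, U_{0,T} G_K = D x \, K$, gives
\begin{equation*}
\bigl(X_{1,T} - D V_{0,T} + D x\, U_{0,T}\bigr) G_K = A + BK + D x\, K.
\end{equation*}

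Finally, applying both sides to $x$ and observing that $Kx$ is a scalar (so that $D x\, K x = D x\, (Kx)$ coincides with $D x u$ at $u = Kx$), the right-hand side becomes $(A+BK)x + D x(Kx)$, which is exactly the closed-loop dynamics of \eqref{bil.sys} under $u = Kx$. This establishes $x^+ = g_D(x)\, x$.

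There is no serious obstacle; the proof is pure bookkeeping. The only point deserving a bit of care is confirming that the scalar nature of $u$ lets us identify $D x\, U_{0,T} G_K x$ with the original bilinear term, since otherwise the placement of the factor $Kx$ relative to $Dx$ would not be transparent. Everything else follows from the two defining constraints on $G_K$.
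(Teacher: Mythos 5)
Your proof is correct and follows essentially the same route as the paper's: both hinge on the data-consistency identity $X_{1,T} = A X_{0,T} + B U_{0,T} + D V_{0,T}$ together with the two constraints $X_{0,T}G_K = I$ and $K = U_{0,T}G_K$; you merely run the substitution in the opposite direction (from the data identity toward the closed loop, rather than expanding $A+BK+DxK$ and substituting). The remark about the scalar input making $Dx\,Kx$ unambiguous is a fair point of care but does not change the argument.
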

\begin{proof}
\eqref{bil.sys} with state feedback $u = K x$ becomes $x^+ = (A + B K + D x K) x$. This closed-loop matrix is, by~\eqref{I=X0 GK},
\begin{align*}
& A + B K + D x K = A \cdot I + B K + D x K \\
& = A X_{0,T} G_K + B U_{0,T} G_K + D x U_{0,T} G_K\\
& = (A X_{0,T} + B U_{0,T} + D x U_{0,T}) G_K\\
& = (X_{1,T} - D V_{0,T} + D x U_{0,T}) G_K,
\end{align*}
since the data in~\eqref{data} satisfy
$X_{1,T} = A X_{0,T} + B U_{0,T} + D V_{0,T}$, and this proves the statement.
\end{proof}
The reparametrization $G_K$ is a decision variable that we tune to achieve our control objective. Based on $G_K$ and on data, we define for compactness
\begin{equation}
\label{def cal quantities}
\mathcal{A}_c := X_{1,T} G_K,\, \mathcal{F} := I,\, \mathcal{H} := - V_{0,T} G_K,\, \mathcal{K} := U_{0,T} G_K,
\end{equation}
so that the closed-loop representation in~\eqref{bil.sys.data} becomes
\begin{equation}
\label{bilinear data D}
x^+ = (\mathcal{A}_c + \mathcal{F} D \mathcal{H} + D x \mathcal{K} )x = g_D(x) x,
\end{equation}
where $D$ is highlighted. As mentioned earlier, we aim at giving a guaranteed under-approximation of the basin of attraction of the closed-loop system in~\eqref{bilinear data D}. We do so by considering a quadratic Lyapunov function
\begin{equation}
\label{Lf}
V(x)=x^\top Q x
\end{equation}
with $Q=Q^\top \succ 0$ and imposing the strict decrease of $V\big(g_D(x)x\big) - V(x)$ for the dynamics in~\eqref{bilinear data D}. The last quantity is easily computed as in the next claim.
\begin{lemma}
\label{lemma:Lf increment}
We have that
$V\big( g_D(x)x \big) - V(x) = x^\top \mathcal{N}_D(x) x$ with $\mathcal{N}_D(x)$ defined as
\begin{align}
& \mathcal{N}_D(x):= (\mathcal{A}_c + \mathcal{F} D \mathcal{H})^\top Q (\mathcal{A}_c + \mathcal{F} D \mathcal{H}) - Q  \notag \\
& +(\mathcal{A}_c + \mathcal{F} D \mathcal{H})^\top Q Dx \mathcal{K} +  \mathcal{K}^\top x^\top D^\top Q (\mathcal{A}_c + \mathcal{F} D \mathcal{H})\notag \\
& + \mathcal{K}^\top x^\top D^\top Q D x \mathcal{K}. \label{N_D(x)}
\end{align}
\end{lemma}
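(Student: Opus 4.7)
The statement is a direct algebraic identity, so the plan is simply to compute $V(g_D(x)x)-V(x)$ by substitution and expansion, and verify that the result matches the displayed expression for $\mathcal{N}_D(x)$.

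First I would use the definition of $V$ in~\eqref{Lf} together with $x^+ = g_D(x) x$ to write
\begin{equation*}
V\big(g_D(x)x\big) - V(x) = x^\top \big[\, g_D(x)^\top Q\, g_D(x) - Q \,\big]\, x,
\end{equation*}
which already factors out $x^\top$ on the left and $x$ on the right. This reduces the task to expanding $g_D(x)^\top Q\, g_D(x)-Q$ and identifying it with $\mathcal{N}_D(x)$.

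Next, using the closed-loop representation~\eqref{bilinear data D}, I would split $g_D(x)$ as the sum of two pieces
\begin{equation*}
g_D(x) = \underbrace{(\mathcal{A}_c + \mathcal{F} D \mathcal{H})}_{=: M(D)} + \underbrace{D x \mathcal{K}}_{=: N(D,x)},
\end{equation*}
and then expand $(M+N)^\top Q (M+N)$ into the four terms $M^\top Q M$, $M^\top Q N$, $N^\top Q M$, $N^\top Q N$. Using $(Dx\mathcal{K})^\top = \mathcal{K}^\top x^\top D^\top$, these four contributions give, in order, exactly the four summands on the right-hand side of the definition of $\mathcal{N}_D(x)$ in~\eqref{N_D(x)}: the term $(\mathcal{A}_c + \mathcal{F} D \mathcal{H})^\top Q (\mathcal{A}_c + \mathcal{F} D \mathcal{H})$, the two cross-terms involving $\mathcal{K}^\top x^\top D^\top$ and $Dx\mathcal{K}$, and the quadratic term $\mathcal{K}^\top x^\top D^\top Q D x \mathcal{K}$. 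Subtracting $Q$ produces the leading $-Q$ that appears on the first line of~\eqref{N_D(x)}, completing the identification.

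There is no real obstacle here: the result is purely a bookkeeping computation, and the only point that requires a bit of care is keeping track of the order of the factors in the cross-terms (since $x$, $\mathcal{K}$ and $D$ are non-square and non-commuting in general), so that the transposes are placed correctly. No other tools beyond the definitions are needed, and Assumption~\ref{ass:matrixNormD} plays no role in this step; it enters only in the subsequent analysis where $\mathcal{N}_D(x)$ is bounded over admissible $D$.
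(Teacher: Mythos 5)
Your proposal is correct and matches the paper's proof, which simply states that the identity is immediate upon substituting \eqref{bilinear data D} into $V\big(g_D(x)x\big)-V(x)$; your expansion of $(M+N)^\top Q (M+N)$ just spells out the same substitution in detail.
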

\begin{proof}
The expression for $\mathcal{N}_D(x)$ is immediate by substituting \eqref{bilinear data D} in $V\big( g_D(x)x \big) - V(x)$.
\end{proof}
Note that for $D=0$, \eqref{bil.sys} becomes linear and \eqref{N_D(x)} reduces to $\mathcal{N}_D(x) = \mathcal{A}_c^\top Q \mathcal{A}_c - Q$,  corresponding to the classical Lyapunov condition for discrete-time linear systems. 
We impose $V\big(g_D(x)x\big) - V(x) < 0$ for all $x \neq 0$ in the ellipsoid 
\begin{equation}
\label{E_Q}
\mathcal{E}_Q := \{ x \in \real^n \colon x^\top Q x \le 1\},
\end{equation}
by designing the decision variables $G_K$, which determines $K= U_{0,T} G_K$, and $Q$, which will be optimized so that the volume of the ellipsoid $\mathcal{E}_Q$ is maximized. The design will be based only on data, and return the ellipsoid $\mathcal{E}_Q$ as a guaranteed under-approximation of the basin of attraction.

We summarize the system description and control objective illustrated in this section as follows.
\begin{problem}
\label{probl:state}
Based only on data collected from an off-line experiment as in the matrices in~\eqref{data}, obtain a controller $u= K x$ for \eqref{bil.sys}
such that for the closed-loop system, the origin has a guaranteed basin of attraction. The data-based design is performed based on the decision variables $G_K$ and $Q$ of the quadratic Lyapunov function in~\eqref{Lf}, a sublevel set of which gives the guaranteed basin of attraction.
\end{problem}

Some remarks are in order.

\begin{remark}
The existence of a matrix $G_K$ satisfying \eqref{I=X0 GK}
is related to the ``quality" of the experimental data. In fact,
condition \eqref{I=X0 GK} expresses the property that 
the data are sufficiently rich so that the system dynamics 
can be parametrized directly in terms of the matrices in \eqref{data}. 
A key property established in \cite{willems2005note}
is that, for linear systems, $X_{0,T}$ is full-row rank (thus, a solution $G_K$ to~\eqref{I=X0 GK} exists) when the experiment is carried out using a sufficiently {exciting}
input signal. An extension of this property to nonlinear systems
is discussed in \cite{deptMTNS2020} where it is shown that 
under prior knowledge of an upper bound on the nonlinearity 
(in fact, on $D$ in the present case of bilinear systems)
one can always design experiments so that \eqref{I=X0 GK} is fulfilled. 
\end{remark}

\begin{remark}
\label{rem:why scalar input}
The present analysis can be extended to bilinear systems 
with input $u \in \real^m$ and $m\ge 2$.
For $m=2$, \eqref{bil.sys} can be written for $u=(u_1,u_2)$ as
\begin{equation}
\label{bil.sys.2d.input}
x^+ = A x + B_1 u_1 + B_2 u_2 + D_1 x u_1 + D_2 x u_2.
\end{equation}
We can define $U_{0,T}^{(1)}$ and $U_{0,T}^{(2)}$ as in \eqref{U0}, but considering respectively the components $u_1$ and $u_2$. Similarly, we can define $V_{0,T}^{(1)}$ and $V_{0,T}^{(2)}$ as in~\eqref{V0}. 
Based on the very same steps as in Lemma~\ref{lem1}, we can obtain for $U_{0,T}=\smat{U_{0,T}^{(1)}\\U_{0,T}^{(2)}}$ the next equivalent representation of~\eqref{bil.sys.2d.input}
\[
x^+ \!= (X_{1,T}\! - D_1 V_{0,T}^{(1)}\! - D_2 V_{0,T}^{(2)}\! + D_1 x U_{0,T}^{(1)} \! + D_2 x U_{0,T}^{(2)} ) G_K x.
\]
This expression shows by comparison with~\eqref{bil.sys.data} that the case for $m=2$ can be treated using the same procedure we develop in the presence of a single unknown $D$, and this consideration easily generalizes to $m$ larger than $2$. 
For this reason we focus on the essential case with input $u \in \real$.%
\end{remark}%

\begin{remark}
\label{rem:univ approx}
The universal approximation property of bilinear systems mentioned in Section~\ref{sec:intro} holds with respect to continuous-time nonlinear systems. We focus here on discrete-time bilinear systems since the data in~\eqref{data} are samples obtained from experiments. However, the same results would apply to continuous-time bilinear systems if $X_{1,T}$ in~\eqref{X1} is replaced by the samples of the state derivatives. These results would then lend themselves to the analysis of a bilinear approximation of continuous-time nonlinear systems (provided disturbances are accounted for).
\end{remark}

\section{Data-based solution with guaranteed basin of attraction}
\label{sec:data-based}

In Section~\ref{sec:description and problem}, we showed that data allow expressing \eqref{bil.sys} in closed loop with $u=K x$ as \eqref{bilinear data D} (by introducing the reparametrization $G_K$ of $K$). Data, however, did not allow us to completely remove the matrices of model~\eqref{bil.sys}. In particular, $g_D(x)$ in~\eqref{bilinear data D} still contains two instances of matrix $D$ (namely, $D x \mathcal{K}$ and $\mathcal{F} D \mathcal{H}$), which can both be interpreted as perturbation of the matrix $\mathcal{A}_c$. 
In this section we first address the former, which is more standard and occurs analogously for model-based design of a bilinear system (see, e.g., \cite{Khlebnikov2018arc}), and then the latter, which is motivated by our desire to solve Problem~\ref{probl:state} based only on data and calls for the matrix norm bound in Assumption~\ref{ass:matrixNormD}. 

Before presenting the developments of this section, we recall 
an auxiliary result from~\cite{petersen1987stabilization}, which has been reported in a convenient form as \cite[Lemma~1]{Khlebnikov2018arc} and is related to the S-procedure \cite[\S 2.6.3]{boyd1994linear}. In particular, \cite[Lemma~1]{Khlebnikov2018arc} implies the next fact.
\begin{fact} \label{fact:petersen lemma}
(\cite[Lemma~1]{Khlebnikov2018arc}) Let $\mathsf{G}= \mathsf{G}^\top \in \real^{\mathsf{n} \times \mathsf{n}}$, $\mathsf{M} \in \real^{\mathsf{n} \times \mathsf{p}}$, $\mathsf{N} \in \real^{\mathsf{n} \times \mathsf{q}}$.
\begin{equation}
\label{petersen hypothesis}
\begin{aligned}
& \mathsf{G}+\mathsf{M} \mathsf{D} \mathsf{N}^\top + \mathsf{N} \mathsf{D}^\top \mathsf{M}^\top \prec 0\\
& \hspace*{2.5cm} \text{for all } \mathsf{D} \in \real^{\mathsf{p} \times \mathsf{q}} \text{ with } \| \mathsf{D} \| \le 1
\end{aligned}
\end{equation}
if there exists a scalar $\mathsf{e}$ such that
\begin{equation}
\label{petersen thesis}
\begin{bmatrix}
\mathsf{G} + \mathsf{e} \mathsf{M} \mathsf{M}^\top 	& \mathsf{N}\\
\mathsf{N}^\top 									& - \mathsf{e} I\\
\end{bmatrix} \prec 0.
\end{equation}
\end{fact}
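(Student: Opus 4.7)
The plan is to collapse the block LMI \eqref{petersen thesis} to a single condensed inequality via the Schur complement, and then to dominate the indefinite cross term $\mathsf{M}\mathsf{D}\mathsf{N}^\top + \mathsf{N}\mathsf{D}^\top \mathsf{M}^\top$ appearing in \eqref{petersen hypothesis} by a sum of squares whose shape matches the Schur reduction. First I would observe that the lower-right block of \eqref{petersen thesis} equals $-\mathsf{e} I$, so the LMI forces $\mathsf{e}>0$. Taking the Schur complement with respect to that block renders \eqref{petersen thesis} equivalent to the single scalar-parametrized inequality
\[
\mathsf{G} + \mathsf{e}\,\mathsf{M}\mathsf{M}^\top + \mathsf{e}^{-1}\,\mathsf{N}\mathsf{N}^\top \prec 0.
\]

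Next, for every $\mathsf{D}$ and every $\mathsf{e}>0$, expanding the manifestly nonnegative quantity
\[
\bigl(\sqrt{\mathsf{e}}\,\mathsf{M} - \mathsf{e}^{-1/2}\mathsf{N}\mathsf{D}^\top\bigr)\bigl(\sqrt{\mathsf{e}}\,\mathsf{M} - \mathsf{e}^{-1/2}\mathsf{N}\mathsf{D}^\top\bigr)^\top \succeq 0
\]
yields the Young-type matrix inequality
\[
\mathsf{M}\mathsf{D}\mathsf{N}^\top + \mathsf{N}\mathsf{D}^\top \mathsf{M}^\top \preceq \mathsf{e}\,\mathsf{M}\mathsf{M}^\top + \mathsf{e}^{-1}\,\mathsf{N}\mathsf{D}^\top \mathsf{D}\mathsf{N}^\top.
\]
Since $\|\mathsf{D}\|\le 1$ is equivalent to $\mathsf{D}^\top \mathsf{D}\preceq I$, monotonicity of $\preceq$ under conjugation gives $\mathsf{N}\mathsf{D}^\top \mathsf{D}\mathsf{N}^\top \preceq \mathsf{N}\mathsf{N}^\top$ uniformly in $\mathsf{D}$, and chaining this with the previous bound produces
\[
\mathsf{G} + \mathsf{M}\mathsf{D}\mathsf{N}^\top + \mathsf{N}\mathsf{D}^\top \mathsf{M}^\top \preceq \mathsf{G} + \mathsf{e}\,\mathsf{M}\mathsf{M}^\top + \mathsf{e}^{-1}\,\mathsf{N}\mathsf{N}^\top \prec 0
\]
for any admissible $\mathsf{D}$, which is exactly \eqref{petersen hypothesis}.

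The argument is essentially algebraic, and the only point requiring a bit of care is its ordering: one must extract the sign of $\mathsf{e}$ from \eqref{petersen thesis} \emph{before} invoking the Schur complement, and one must split the indefinite cross term with the matching scaling $\mathsf{e},\mathsf{e}^{-1}$ so that the two bounds compose cleanly into a single norm-free inequality. I do not anticipate any genuine obstacle beyond this bookkeeping; the nontrivial content of the result lies not in its proof but in the fact that it converts a semi-infinite matrix inequality (quantified over all $\mathsf{D}$ with $\|\mathsf{D}\|\le 1$) into a tractable LMI parametrized by a single scalar $\mathsf{e}$.
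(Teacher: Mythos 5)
Your argument is correct: the sign of $\mathsf{e}$ is extracted before the Schur complement, the completion-of-squares step $\bigl(\sqrt{\mathsf{e}}\,\mathsf{M}-\mathsf{e}^{-1/2}\mathsf{N}\mathsf{D}^\top\bigr)\bigl(\cdot\bigr)^\top\succeq 0$ gives the right Young-type bound, and the congruence $\mathsf{N}\mathsf{D}^\top\mathsf{D}\mathsf{N}^\top\preceq\mathsf{N}\mathsf{N}^\top$ legitimately uses $\|\mathsf{D}\|\le 1\Leftrightarrow\mathsf{D}^\top\mathsf{D}\preceq I$. The paper does not prove this Fact at all --- it imports it by citation from Petersen's lemma as restated in \cite[Lemma~1]{Khlebnikov2018arc} --- and what you have written is precisely the standard proof of that cited result, so there is nothing to add.
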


\begin{remark}
\cite{Amato200c&sII} considers a model-based setting similar to \cite{Khlebnikov2018arc} returning a polytope as a subset of the basin of attraction. However, since \cite[Thm.~2]{Amato200c&sII} does so by dilating the polytope and including in that a quadratic sublevel set, we refer to the naturally quadratic approach in~\cite{Khlebnikov2018arc}.
\end{remark}
With Fact~\ref{fact:petersen lemma} we are in a position to develop this section. The next lemma addresses the term $D x \mathcal{K}$ in $g_D(x)$ in~\eqref{bilinear data D}. Specifically, it shows that as long as we restrict the analysis to a sublevel set $\mathcal{E}_Q$ (defined in \eqref{E_Q}) of the Lyapunov function $V$ in~\eqref{Lf} (where $Q$ itself is a decision variable determining the size of this sublevel set), strict decrease of $V$ along solutions is guaranteed ($\mathcal{N}_D(x) \prec 0$) since $\mathcal{N}_D(x)$ determines $V\big( g_D(x)x \big) - V(x)$ as in Lemma~\ref{lemma:Lf increment}.
\begin{lemma}
\label{lemma:mi with D}
If there exist
$\tau$ and $Q=Q^\top$ such that
\begin{align}
& \begin{bmatrix}
- Q 												& 0			&  \mathcal{K}^\top	& (\mathcal{A}_c+\mathcal{F} D\mathcal{H})^\top\\
0													& - \tau Q 	& 0						& D^\top\\
\mathcal{K}										& 0			& -\tfrac{1}{\tau} I	& 0\\
\mathcal{A}_c+\mathcal{F} D \mathcal{H} 			& D 		& 0						& - Q^{-1} 
\end{bmatrix} \prec 0, \label{mi with D}
\end{align}
then $\mathcal{N}_D(x) \prec 0$ for all $x \in \mathcal{E}_Q$.
\end{lemma}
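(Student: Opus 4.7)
Proof plan.

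My plan is to view~\eqref{mi with D} as an $S$-procedure certificate: two Schur complements reduce it to a $2\times 2$ block LMI which, when evaluated on a judiciously chosen vector, yields exactly $\mathcal{N}_D(x)\prec 0$ on~$\mathcal{E}_Q$.

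First I would read off from~\eqref{mi with D}, by looking at the $(3,3)$ and $(1,1)$ principal submatrices, that $-\tfrac{1}{\tau}I\prec 0$ and $-Q\prec 0$, so $\tau>0$ and $Q\succ 0$. Hence the Schur pivots $-\tfrac{1}{\tau}I$ and $-Q^{-1}$ are both negative definite, and eliminating the third and fourth block rows/columns rewrites~\eqref{mi with D} equivalently as
\[
\mathsf{M}:=\begin{bmatrix} -Q+\tau\mathcal{K}^\top\mathcal{K}+\mathcal{A}_{cD}^\top Q\mathcal{A}_{cD} & \mathcal{A}_{cD}^\top QD\\ D^\top Q\mathcal{A}_{cD} & -\tau Q+D^\top QD\end{bmatrix}\prec 0,
\]
with the shorthand $\mathcal{A}_{cD}:=\mathcal{A}_c+\mathcal{F}D\mathcal{H}$. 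This step is routine bookkeeping.

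The crux is the choice of vector. For arbitrary $x\in\mathcal{E}_Q$ and $y\ne 0$, I would evaluate $\mathsf{M}\prec 0$ on the nonzero vector $v=\begin{bmatrix}y\\ (\mathcal{K}y)\,x\end{bmatrix}$. Since $\mathcal{K}y$ is a scalar and $(Dx\mathcal{K})y=(\mathcal{K}y)Dx$, the non-$\tau$ terms in $v^\top \mathsf{M}v$ assemble into $-y^\top Qy + y^\top(\mathcal{A}_{cD}+Dx\mathcal{K})^\top Q(\mathcal{A}_{cD}+Dx\mathcal{K})y$, which equals $y^\top \mathcal{N}_D(x)y$ because $\mathcal{N}_D(x)=(\mathcal{A}_{cD}+Dx\mathcal{K})^\top Q(\mathcal{A}_{cD}+Dx\mathcal{K})-Q$; the $\tau$-dependent terms collapse to $\tau(\mathcal{K}y)^2(1-x^\top Qx)$. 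Thus
\[
0>v^\top \mathsf{M}v=y^\top \mathcal{N}_D(x)y+\tau(\mathcal{K}y)^2(1-x^\top Qx).
\]

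Finally, $x\in\mathcal{E}_Q$ gives $1-x^\top Qx\ge 0$ and $\tau>0$, so the second summand is nonnegative; consequently $y^\top \mathcal{N}_D(x)y<0$ for every $y\ne 0$, i.e., $\mathcal{N}_D(x)\prec 0$. I expect the algebra to be routine; the conceptual hurdle is recognising that the $(\mathcal{K},-\tfrac{1}{\tau}I)$ pair in~\eqref{mi with D} plays the role of an $S$-procedure multiplier enforcing $x^\top Qx\le 1$, and that this is unlocked precisely by the substitution $v=\begin{bmatrix}y\\ (\mathcal{K}y)x\end{bmatrix}$, which couples the free direction $y$ with the state $x$ through the scalar $\mathcal{K}y$.
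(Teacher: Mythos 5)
Your proof is correct, and it takes a genuinely different route from the paper's. After the two (routine and valid) Schur complement steps — which the paper also performs, in essentially the same order — the paper does \emph{not} evaluate the resulting block matrix on a test vector. Instead it performs a congruence with $\mathrm{diag}(I,I,(\Qh)^{-1},I)$, observes that $x\in\mathcal{E}_Q$ is equivalent to $\|x^\top \Qh\|\le 1$, and invokes Fact~\ref{fact:petersen lemma} (Petersen's lemma) with the normalized state $\Qh x$ playing the role of the unit-norm-bounded uncertainty $\mathsf{D}$; a final Schur complement then recovers $\mathcal{N}_D(x)\prec 0$. Your argument replaces all of that with the single substitution $v=\smat{y\\ (\mathcal{K}y)x}$, exploiting that $\mathcal{K}y$ is a scalar (which is where the standing assumption $u\in\real$ is used), so that the cross terms reassemble into $y^\top\big((\mathcal{A}_c+\mathcal{F}D\mathcal{H}+Dx\mathcal{K})^\top Q(\mathcal{A}_c+\mathcal{F}D\mathcal{H}+Dx\mathcal{K})-Q\big)y=y^\top\mathcal{N}_D(x)y$ and the $\tau$-terms collapse to the nonnegative slack $\tau(\mathcal{K}y)^2(1-x^\top Qx)$. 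Your computation checks out, and since $v\neq 0$ whenever $y\neq 0$, the conclusion follows. What each approach buys: yours is shorter, avoids the matrix square root of $Q$ and any appeal to Fact~\ref{fact:petersen lemma}, and makes the S-procedure role of the $(\mathcal{K},-\tfrac{1}{\tau}I)$ blocks completely transparent; the paper's version deliberately routes both perturbation terms ($Dx\mathcal{K}$ here, $\mathcal{F}D\mathcal{H}$ in the subsequent lemma) through the same robust-control tool, keeping the exposition parallel to the model-based reference it builds on. Both proofs establish only the sufficiency direction, which is all the lemma claims.
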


\begin{proof}
The proof follows closely \cite{Khlebnikov2018arc}, but is reported for self-containedness. Define for compactness
\begin{equation}
\label{cal Q}
\mathcal{R}:=-Q + (\mathcal{A}_c + \mathcal{F} D \mathcal{H})^\top Q (\mathcal{A}_c + \mathcal{F} D \mathcal{H})
\end{equation}
and note for the following that \eqref{mi with D} implies $Q\succ 0$ and $\tau>0$.
By Schur's complement (with respect to lowest block $-Q^{-1}$), \eqref{mi with D} is equivalent, by \eqref{cal Q}, to
\begin{equation*}
\begin{bmatrix}
\mathcal{R} 		& (\mathcal{A}_c + \mathcal{F} D \mathcal{H})^\top Q D	& \mathcal{K}^\top \\
\star				& -\tau Q + D^\top Q D								& 0\\
\star				& \star												& -\tfrac{1}{\tau} I
\end{bmatrix} \prec 0.
\end{equation*}
By Schur's complement, this inequality is equivalent to
\begin{equation*}
\begin{bmatrix}
\mathcal{R}	& (\mathcal{A}_c + \mathcal{F} D \mathcal{H})^\top Q D	& \mathcal{K}^\top  	& 0\\
\star		& -\tau Q 												& 0						& D^\top Q\\
\star		& \star													& -\tfrac{1}{\tau} I 	& 0\\
\star		& \star													& \star					& -Q
\end{bmatrix} \prec 0.
\end{equation*}
Rearranging rows and columns of this inequality gives
\begin{equation*}
\begin{bmatrix}
\mathcal{R}	& 0			& (\mathcal{A}_c + \mathcal{F} D \mathcal{H})^\top Q D	& \mathcal{K}^\top  \\
\star 					& -Q		& Q D													& 0					\\
\star					& \star 	& -\tau Q 												& 0					\\
\star					& \star		& \star													& -\tfrac{1}{\tau} I \\
\end{bmatrix} \prec 0,
\end{equation*}
which is equivalent to \eqref{mi with D}.
We want to put this inequality in a form where we can apply Fact~\ref{fact:petersen lemma}. Then, we pre- and postmultiply the previous inequality by the block diagonal matrix with entries $I$, $I$, $(\Qh)^{-1}$, $I$ (where $\Qh$ is the unique symmetric, positive definite square root matrix for ${Q}={Q}^\top \succ 0$ \cite[Thm.~7.2.6]{horn2013matrix}, so that ${Q}=\Qh \Qh$) and apply Schur's complement (with respect to the lowest block $-\tfrac{1}{\tau} I$) to obtain with some computations
\begin{equation*}
\begin{small}
\hspace*{-3pt}
\left[
\begin{array}{@{}c|c@{}}
\begin{bmatrix}
\mathcal{R}	& \hspace*{-5pt}0\\
0			& \hspace*{-5pt}-Q
\end{bmatrix}
\hspace*{-2pt}
+ \hspace*{-2pt}
\tau
\begin{bmatrix}
\mathcal{K}^\top \\
0
\end{bmatrix}\hspace*{-3pt}
\begin{bmatrix}
\mathcal{K} & \hspace*{-4pt}0
\end{bmatrix}
& \begin{bmatrix}
(\mathcal{A}_c \!+\! \mathcal{F} D \mathcal{H})^\top Q D (\Qh)^{-1}\\
Q D (\Qh)^{-1}
\end{bmatrix}\\ \hline
\star & - \tau I\
\end{array}
\right] \prec 0.
\end{small}
\end{equation*}
Note that $x^\top Q x = (x^\top \Qh)(\Qh x)$, hence for all $x$ such that $x^\top Q x \le 1$, $\| x^\top \Qh \| \le 1$. With this observation and by Fact~\ref{fact:petersen lemma} we conclude (after some simplifications) that
\begin{align}
& \begin{bmatrix}
\mathcal{R}	& 0\\
0			& -Q
\end{bmatrix} 
+
\begin{bmatrix}
\mathcal{K}^\top\\
0
\end{bmatrix}
x^\top
\begin{bmatrix}
D^\top Q (\mathcal{A}_c + \mathcal{F} D \mathcal{H})												& D^\top Q
\end{bmatrix}
\notag \\
&
\hspace*{2cm}
+
\begin{bmatrix}
(\mathcal{A}_c + \mathcal{F} D \mathcal{H})^\top Q D\\
Q D
\end{bmatrix}
x
\begin{bmatrix}
\mathcal{K} & 0
\end{bmatrix}
\prec 0 \label{Delta V.before final Schur}
\end{align}
for all $x$ such that $x^\top Q x \le 1$. We show now that this is equivalent to the conclusion of the lemma.
Define for compactness
\[
\mathcal{P} := \mathcal{R} +(\mathcal{A}_c + \mathcal{F} D \mathcal{H})^\top Q Dx \mathcal{K} + \mathcal{K}^\top x^\top D^\top Q (\mathcal{A}_c + \mathcal{F} D \mathcal{H}),
\]
so that \eqref{Delta V.before final Schur}  is equivalent, after some computations, to
\begin{equation*}
\begin{bmatrix}
\mathcal{P} & \mathcal{K}^\top x^\top D^\top Q\\
Q D x \mathcal{K} & -Q\\
\end{bmatrix} \prec 0.
\end{equation*}
By Schur's complement, we obtain that
\begin{equation*}
\mathcal{P} +  \mathcal{K}^\top x^\top D^\top Q D x \mathcal{K} \prec 0 \quad \text{ for all } x \text{ such that } x^\top Q x \le 1,
\end{equation*}
which is equivalent by~\eqref{N_D(x)} to $\mathcal{N}_D(x) \prec 0$ for all $x \in \mathcal{E}_Q$.
\end{proof}

The next lemma addresses the term $\mathcal{F} D \mathcal{H}$ in $g_D(x)$ in~\eqref{bilinear data D}. Specifically, it shows that as long as the matrix $D$ is bounded in norm by $\delta$ as in Assumption~\ref{ass:matrixNormD}, we can obtain a matrix inequality depending only on $\delta$ and  guarantee that Lemma~\ref{lemma:mi with D} and its conclusions hold for all such $D$, which is key to obtain a fully data-based solution to our problem.

\begin{lemma}
\label{lemma:mi without D}
Let Assumption~\ref{ass:matrixNormD} hold.
If there exist $\tau$, $\epsilon_2$ and $Q=Q^\top$ such that
\begin{equation}
\begin{bmatrix}
- Q 					& 0				& \mathcal{K}^\top		 	& \mathcal{A}_c^\top 		& \delta \mathcal{H}^\top \\
0						& - \tau Q 		& 0							& 0							& \delta I\\
\mathcal{K}				& 0				& -\tfrac{1}{\tau} I		& 0							& 0\\
\mathcal{A}_c			& 0				& 0							& - Q^{-1} + \epsilon_2 I	& 0\\
\delta \mathcal{H} 		& \delta I		& 0							& 0 						& - \epsilon_2 I
\end{bmatrix} \prec 0,
\label{mi without D}
\end{equation}
then \eqref{mi with D} holds.
\end{lemma}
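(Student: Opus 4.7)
The plan is to view \eqref{mi with D} as a Petersen-type inequality in the unknown matrix $D$ and to apply Fact~\ref{fact:petersen lemma} in order to replace it by a condition that uses only the norm bound $\delta$ from Assumption~\ref{ass:matrixNormD} together with the additional scalar $\epsilon_2$.

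First, I would decompose the left-hand side of \eqref{mi with D} as $G_0 + MDN^\top + ND^\top M^\top$, where $G_0$ collects the entries of \eqref{mi with D} that do not depend on $D$ and the factors are
\[
M = \begin{bmatrix} 0 \\ 0 \\ 0 \\ \mathcal{F} \end{bmatrix}, \qquad
N = \begin{bmatrix} \mathcal{H}^\top \\ I \\ 0 \\ 0 \end{bmatrix}.
\]
A direct block multiplication, using $\mathcal{F}=I$ from~\eqref{def cal quantities}, shows that $MDN^\top$ places $\mathcal{F}D\mathcal{H}$ in position $(4,1)$ and $D$ in position $(4,2)$, while $ND^\top M^\top$ contributes the symmetric transposes in positions $(1,4)$ and $(2,4)$, thus exactly reconstructing the $D$-dependent part of \eqref{mi with D}. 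By Assumption~\ref{ass:matrixNormD}, I would then rescale by setting $\bar D := D/\delta$, so that $\|\bar D\|\le 1$ and the perturbation becomes $M\bar D (\delta N)^\top + (\delta N)\bar D^\top M^\top$.

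At this point Fact~\ref{fact:petersen lemma} applies with $\mathsf{G}=G_0$, $\mathsf{M}=M$, $\mathsf{N}=\delta N$ and $\mathsf{e}=\epsilon_2$: if the LMI
\[
\begin{bmatrix} G_0 + \epsilon_2\,MM^\top & \delta N \\ \delta N^\top & -\epsilon_2 I \end{bmatrix} \prec 0
\]
holds, then $G_0 + M\bar D(\delta N)^\top + (\delta N)\bar D^\top M^\top \prec 0$ for every $\bar D$ with $\|\bar D\|\le 1$, and in particular for the actual $\bar D=D/\delta$, which is precisely \eqref{mi with D}. It then remains to expand the blocks of the LMI above: since $\mathcal{F}=I$, the product $MM^\top$ is zero everywhere except for an $I$ in the $(4,4)$ block, so $\epsilon_2 MM^\top$ simply shifts $-Q^{-1}$ into $-Q^{-1}+\epsilon_2 I$; the additional column $\delta N$ appends $\delta\mathcal{H}^\top$ and $\delta I$ to the top-right, its transpose places $\delta\mathcal{H}$ and $\delta I$ to the bottom-left, and the new corner is $-\epsilon_2 I$. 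The resulting matrix coincides with the left-hand side of \eqref{mi without D}, which closes the argument.

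The step I expect to be the main obstacle is identifying a \emph{single} pair $(M,N)$ that simultaneously matches both occurrences of $D$ in \eqref{mi with D}, namely one inside the composite term $\mathcal{F}D\mathcal{H}$ and one as a standalone entry. The choice above works smoothly because $\mathcal{F}=I$: this lets the bottom block of $M$ act as the identity and feed $D$ into both right factors $\mathcal{H}$ and $I$ at once through $N$. Without this structural coincidence, an extra splitting step (or a separate application of Fact~\ref{fact:petersen lemma}) would likely be required before a single Petersen-style reduction could be performed.
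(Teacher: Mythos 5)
Your proposal is correct and follows essentially the same route as the paper: the same factorization with $\mathsf{M}=\smat{0\\0\\0\\\mathcal{F}}$ and $\mathsf{N}=\smat{\delta\mathcal{H}^\top\\\delta I\\0\\0}$ (i.e., the rescaling $D/\delta$ absorbed into $\mathsf{N}$), followed by a single application of Fact~\ref{fact:petersen lemma} with $\mathsf{e}=\epsilon_2$. Your block-by-block verification that $\epsilon_2\mathsf{M}\mathsf{M}^\top$ only perturbs the $(4,4)$ entry and that the bordered matrix reproduces \eqref{mi without D} is exactly the computation the paper leaves implicit.
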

\begin{proof}
Note that from $\mathcal{F}=I$ in~\eqref{def cal quantities}, \eqref{mi with D} is equivalent to
\begin{align*}
& \begin{bmatrix}
- Q 					& 0				& \mathcal{K}^\top		& \mathcal{A}_c^\top 		\\
0						& - \tau Q 		& 0						& 0							\\
\mathcal{K} 			& 0				& -\tfrac{1}{\tau} I	& 0							\\
\mathcal{A}_c 			& 0				& 0						& - Q^{-1} 					\\
\end{bmatrix} + 
\begin{bmatrix}
0\\ 0\\ 0\\ \mathcal{F}
\end{bmatrix}
 \frac{D}{\delta} 
\begin{bmatrix}
\delta \mathcal{H}  & \delta I & 0 & 0 
\end{bmatrix}\notag \\
& \hspace*{.2cm}
+ \begin{bmatrix}
\delta \mathcal{H}^\top \\ \delta I \\ 0 \\ 0 
\end{bmatrix}
 \frac{D^\top}{\delta} 
\begin{bmatrix}
0& 0& 0& \mathcal{F}^\top
\end{bmatrix} \prec 0
\end{align*}
and this equation has the same structure as $\mathsf{G}+\mathsf{M} \mathsf{D} \mathsf{N}^\top + \mathsf{N} \mathsf{D}^\top \mathsf{M}^\top \prec 0$ in Fact~\ref{fact:petersen lemma}, since $\| D \| \le \delta$ ($\delta >0$) by Assumption~\ref{ass:matrixNormD}. 
Indeed, by making the suitable correspondences between the quantities of this lemma and those of Fact~\ref{fact:petersen lemma}, the existence of $\epsilon_2$ such that \eqref{mi without D} holds (corresponding to~\eqref{petersen thesis} of Fact~\ref{fact:petersen lemma}) guarantees that \eqref{mi with D} (corresponding to~\eqref{petersen hypothesis} of Fact~\ref{fact:petersen lemma}) holds for $D$ as in Assumption~\ref{ass:matrixNormD}.
\end{proof}
Lemma~\ref{lemma:mi without D} enables us to generalize the conclusions of Lemma~\ref{lemma:mi with D} for all $D$ with $\| D \| \le \delta$, so that we do not need to rely on the knowledge of $D$ (as it would be the case in a model-based scheme), but just on its (possibly loose) norm bound $\delta$. The matrix inequality \eqref{mi without D} of Lemma~\ref{lemma:mi without D} (where only $\delta$ appears), however, contains products of decision variables and inverses of decision variables. 
We address this in the next proposition, which obtains a matrix inequality that is as close as possible to an LMI (hence efficient to solve) and expresses explicitly the matrix inequality in terms of the available data. This proposition is the main result of this paper.
\begin{proposition}
\label{prop:data based quadr stab}
Under Assumption~\ref{ass:matrixNormD}, suppose there exist $\epsilon_1\in \real$, $\epsilon_2\in \real$, $Y \in \real^{n \times T}$ and $P=P^\top\in \real^{n \times n}$ such that%
\begin{subequations}
\label{qlmi data:all}
\begin{align}
& \begin{bmatrix}
- P 					& 0						& Y U_{0,T}^\top 			& Y X_{1,T}^\top 			& -\delta Y V_{0,T}^\top \\
0						& - \epsilon_1 P 		& 0							& 0							& \delta \epsilon_1 P\\
U_{0,T} Y^\top 			& 0						& -\epsilon_1 I				& 0							& 0\\
X_{1,T} Y^\top			& 0						& 0							& - P + \epsilon_2 I		& 0\\
-\delta V_{0,T} Y^\top	& \delta \epsilon_1 P	& 0							& 0 						& - \epsilon_2 I
\end{bmatrix} \prec 0 \label{qlmi data}\\
& P=X_{0,T} Y^\top, \label{qlmi data:I=X0 GK}
\end{align}
\end{subequations}
and set $Q= P^{-1}$, $G_K= Y^\top P^{-1}$. Then,
\begin{enumerate}[wide,labelindent=0pt,label=(\roman*)]
\item \label{prop:data based quadr stab:lyap} for the dynamics in~\eqref{bilinear data D} corresponding to $D$, the Lyapunov function $V(x)= x^\top Q x= x^\top (X_{0,T} Y^\top)^{-1} x$ satisfies 
\begin{equation*}
V(g_D(x) x) - V(x) < 0 \quad \text{ for all } x \in \mathcal{E}_{Q}\backslash\{0\};
\end{equation*}
\item \label{prop:data based quadr stab:AS} the origin is asymptotically stable for \eqref{bil.sys} with controller $u=K x = U_{0,T} G_K x = U_{0,T} Y^\top (X_{0,T} Y^\top)^{-1} x$ and its basin of attraction contains the set $\mathcal{E}_{Q}$.
\end{enumerate}
\end{proposition}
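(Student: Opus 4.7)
The plan is to reduce the statement to the already-established Lemmas~1--3, then conclude asymptotic stability by a standard Lyapunov-sublevel-set argument. The high-level chain is: (\ref{qlmi data:I=X0 GK}) supplies the reparametrization constraint (\ref{I=X0 GK}); (\ref{qlmi data}) is shown to be a congruence transform of (\ref{mi without D}); Lemma~\ref{lemma:mi without D} then yields (\ref{mi with D}) for every admissible $D$, Lemma~\ref{lemma:mi with D} gives $\mathcal{N}_D(x) \prec 0$ on $\mathcal{E}_Q$, and Lemma~\ref{lem1} connects this to $V(g_D(x)x)-V(x)$.

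First I would dispatch the reparametrization. With the substitution $Q = P^{-1}$ and $G_K = Y^\top P^{-1} = Y^\top Q$, equation (\ref{qlmi data:I=X0 GK}) gives $X_{0,T}G_K = X_{0,T}Y^\top P^{-1} = PP^{-1} = I$, so Lemma~\ref{lem1} applies and the closed-loop system has the representation (\ref{bilinear data D}) with $\mathcal{A}_c = X_{1,T}Y^\top Q$, $\mathcal{H} = -V_{0,T}Y^\top Q$, $\mathcal{K} = U_{0,T}Y^\top Q$.

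The main step, and the main obstacle, is to recognize (\ref{qlmi data}) as a congruence-and-rescaling of (\ref{mi without D}). I expect the required transformation to be pre- and post-multiplication of (\ref{mi without D}) by the block-diagonal matrix $\mathrm{diag}(P,\epsilon_1 P, I, I, I)$ combined with the identification $\tau = 1/\epsilon_1$. The scaling factor $\epsilon_1$ in the $(2,2)$ block is exactly what is needed so that $P(-\tau Q)P \cdot \epsilon_1^2 = -\epsilon_1 P$ on one hand, and $(-1/\tau)I = -\epsilon_1 I$ in the $(3,3)$ block on the other hand; the same $\epsilon_1$ then reappears in the $(2,5)$ block as $\delta \epsilon_1 P$. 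The off-diagonal data-dependent entries are a direct check: for instance $P\mathcal{K}^\top = PQYU_{0,T}^\top = YU_{0,T}^\top$, and analogously $P\mathcal{A}_c^\top = YX_{1,T}^\top$ and $P(\delta\mathcal{H}^\top) = -\delta YV_{0,T}^\top$; block $(4,4)$ matches since $-Q^{-1}+\epsilon_2 I = -P+\epsilon_2 I$. A block-by-block verification should show that (\ref{qlmi data}) is equivalent to (\ref{mi without D}) after this congruence, and in particular implies $P \succ 0$, $\epsilon_1 > 0$.

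Once (\ref{mi without D}) is established, Lemma~\ref{lemma:mi without D} gives (\ref{mi with D}) for every $D$ with $\|D\| \le \delta$, in particular for the unknown true $D$; Lemma~\ref{lemma:mi with D} then yields $\mathcal{N}_D(x) \prec 0$ for every $x \in \mathcal{E}_Q \setminus \{0\}$; and Lemma~\ref{lemma:Lf increment} with Lemma~\ref{lem1} gives item \ref{prop:data based quadr stab:lyap}, namely $V(g_D(x)x) - V(x) = x^\top \mathcal{N}_D(x) x < 0$ on $\mathcal{E}_Q \setminus \{0\}$.

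For item \ref{prop:data based quadr stab:AS} I would invoke the standard Lyapunov argument on a sublevel set: since $V$ is positive definite quadratic and $\mathcal{E}_Q = \{V \le 1\}$ is its unit sublevel set, (i) forces any trajectory starting in $\mathcal{E}_Q$ to remain in $\mathcal{E}_Q$ (forward invariance, because $V$ strictly decreases along solutions there) and the sequence $V(x(k))$ is strictly decreasing and bounded below by $0$. Continuity of $V$ and positive definiteness then force $x(k) \to 0$, yielding asymptotic stability of the origin with $\mathcal{E}_Q$ contained in its basin of attraction, which is exactly the thesis.
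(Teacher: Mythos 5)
Your proposal is correct and follows essentially the same route as the paper: verify $X_{0,T}G_K=I$ from \eqref{qlmi data:I=X0 GK}, show \eqref{qlmi data} is a congruence of \eqref{mi without D} (your single congruence by $\mathrm{diag}(P,\epsilon_1 P,I,I,I)$ with $\tau=1/\epsilon_1$ is exactly the composition of the paper's two successive congruences by $\mathrm{diag}(Q^{-1},Q^{-1},I,I,I)$ and $\mathrm{diag}(I,\tfrac{1}{\tau}I,I,I,I)$), and then chain Lemmas~\ref{lemma:mi without D}, \ref{lemma:mi with D}, \ref{lemma:Lf increment} and a standard Lyapunov sublevel-set argument.
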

\begin{proof}
We begin showing that inequalities \eqref{mi without D} and \eqref{qlmi data} are equivalent, noting for the following that \eqref{qlmi data} implies $P \succ 0$.
With the definitions in~\eqref{def cal quantities}, \eqref{mi without D} is equivalent to
\begin{equation*}
\begin{bmatrix}
- Q 					& 0				& G_K^\top U_{0,T}^\top  	& G_K^\top X_{1,T} ^\top	&  - \delta G_K^\top V_{0,T}^\top \\
\star					& - \tau Q 		& 0							& 0							& \delta I\\
\star					& \star			& -\tfrac{1}{\tau} I		& 0							& 0\\
\star					& \star			& \star						& - Q^{-1} + \epsilon_2 I	& 0\\
\star					& \star			& \star						& \star						& - \epsilon_2 I
\end{bmatrix} \prec 0.
\end{equation*}
By pre- and postmultipling this inequality by the block diagonal matrix with entries $Q^{-1}$, $Q^{-1}$, $I$, $I$, $I$ and by setting $Q= P^{-1}$, $G_K= Y^\top P^{-1}$ as in the statement of the proposition, the last inequality is equivalent to
\begin{equation*}
\begin{bmatrix}
- P 					& 0				& Y U_{0,T}^\top  		& Y X_{1,T} ^\top		&  - \delta Y V_{0,T}^\top \\
\star					& - \tau P 		& 0						& 0						& \delta P\\
\star					& \star			& -\tfrac{1}{\tau} I	& 0						& 0\\
\star					& \star			& \star					& - P + \epsilon_2 I	& 0\\
\star					& \star			& \star					& \star					& - \epsilon_2 I
\end{bmatrix} \prec 0.
\end{equation*}
To avoid the simultaneous presence of $\tau$ and $1/\tau$, this inequality is equivalent to the next one by pre- and postmultiplying by  the block diagonal matrix with entries $I$, $\tfrac{1}{\tau}I$, $I$, $I$, $I$ and setting $\epsilon_1=1/\tau$:
\begin{equation*}
\begin{bmatrix}
- P 					& 0						& Y U_{0,T}^\top  		& Y X_{1,T} ^\top		&  - \delta Y V_{0,T}^\top \\
\star					& - \epsilon_1 P		& 0						& 0						& \delta \epsilon_1 P\\
\star					& \star					& -\epsilon_1 I			& 0						& 0\\
\star					& \star					& \star					& - P + \epsilon_2 I	& 0\\
\star					& \star					& \star					& \star					& - \epsilon_2 I
\end{bmatrix} \prec 0,
\end{equation*}
which is exactly \eqref{qlmi data}. After these manipulations, the conclusions of the proposition follow readily. Indeed, the fact that \eqref{qlmi data} holds, implies that \eqref{mi without D} holds, and then by Lemmas~\ref{lemma:mi without D} and \ref{lemma:mi with D} that 
$D$ as in Assumption~\ref{ass:matrixNormD} satisfies $\mathcal{N}_D(x) \prec 0$ for all $x \in \mathcal{E}_Q$.
By Lemma~\ref{lemma:Lf increment}, \ref{prop:data based quadr stab:lyap} follows.
\eqref{qlmi data:I=X0 GK}, which is equivalent to $I=X_{0,T} G_K$, and Lemma~\ref{lem1} ensure that
\eqref{bil.sys.data} or, equivalently, \eqref{bilinear data D} are an equivalent representation of \eqref{bil.sys} with controller $u=K x = U_{0,T} G_K x$. Standard Lyapunov theorems give then \ref{prop:data based quadr stab:AS}.
\end{proof}

Proposition~\ref{prop:data based quadr stab} effectively solves Problem~\ref{probl:state}. Indeed, if a solution to \eqref{qlmi data:all} is found (which is based on data from an off-line experiment), then we have a controller $K$ and a guaranteed basin of attraction in terms of the set $\mathcal{E}_Q$.

The matrix inequality \eqref{qlmi data} in Proposition~\ref{prop:data based quadr stab} is convenient because, after fixing the scalar $\epsilon_1$, it is an LMI in the decision variables $\epsilon_2$, $Y$, $P$. 
A line search with respect to $\epsilon_1$ on top of solving this LMI is typically preferable than solving directly the bilinear matrix inequality in~\eqref{qlmi data}. Note also that model-based approaches for controlling  bilinear systems encounter such a situation, and 
fix one of the parameters directly \cite{Khlebnikov2018arc} or in an iterative way \cite{tarbouriech2009control}.

A conclusion of Proposition~\ref{prop:data based quadr stab} is that the basin of attraction of the origin contains the set $\mathcal{E}_{Q}=\mathcal{E}_{P^{-1}}$. It is quite natural to maximize the volume of this ellipsoid, which is proportional to $\det (P)$, as is done in the model-based setting of \cite{Khlebnikov2018arc}. (Other size criteria can be optimized, see the discussion in \cite[\S 2.2.5.1]{tarbouriech2011stability}.) This leads to the next immediate corollary.
\begin{corollary}
\label{cor:maximize}
Let Assumption~\ref{ass:matrixNormD} hold.
If there exist a solution to the next optimization problem in the decision variables $\epsilon_1 \in \real$, $\epsilon_2 \in \real$, $Y \in \real^{n \times T}$ and $P=P^\top \in \real^{n \times n}$
\begin{equation*}
\begin{aligned}
& \text{minimize } && -\log \det (P)\\
& \text{subject to} &&  \eqref{qlmi data}, \eqref{qlmi data:I=X0 GK},\\
\end{aligned}
\end{equation*}
then the conclusion of Proposition~\ref{prop:data based quadr stab} holds.
\end{corollary}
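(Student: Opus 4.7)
The plan is to treat this result as an essentially direct consequence of Proposition~\ref{prop:data based quadr stab} and a standard fact about ellipsoid volumes. First I would observe that any feasible point of the optimization problem, and in particular any minimizer, is a tuple $(\epsilon_1,\epsilon_2,Y,P)$ satisfying the two constraints \eqref{qlmi data} and \eqref{qlmi data:I=X0 GK}. These constraints are exactly the hypotheses of Proposition~\ref{prop:data based quadr stab}. Therefore, setting $Q=P^{-1}$ and $G_K=Y^\top P^{-1}$ as in that proposition, items~\ref{prop:data based quadr stab:lyap} and~\ref{prop:data based quadr stab:AS} transfer verbatim: the quadratic Lyapunov function $V(x)=x^\top Q x$ strictly decreases along trajectories of \eqref{bilinear data D} inside $\mathcal{E}_Q\setminus\{0\}$, and the origin is asymptotically stable for \eqref{bil.sys} with $u=U_{0,T}Y^\top(X_{0,T}Y^\top)^{-1}x$ with basin of attraction containing $\mathcal{E}_Q$.

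The remaining step is merely to motivate why the particular objective $-\log\det(P)$ is the right proxy for maximizing the size of the guaranteed basin $\mathcal{E}_Q$. Here I would note that, since \eqref{qlmi data} forces $P\succ 0$, the ellipsoid $\mathcal{E}_Q = \mathcal{E}_{P^{-1}} = \{x\in\real^n : x^\top P^{-1} x \le 1\}$ has volume proportional to $\sqrt{\det P}$. Hence maximizing $\log\det(P)$, equivalently minimizing $-\log\det(P)$, maximizes the volume of $\mathcal{E}_Q$ over all admissible decision variables. Since $-\log\det(\cdot)$ is convex on the cone of positive definite matrices, the problem is tractable after fixing $\epsilon_1$ (which, as already noted after Proposition~\ref{prop:data based quadr stab}, turns \eqref{qlmi data} into an LMI in the remaining variables).

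There is no real obstacle in this proof, which amounts to little more than invoking Proposition~\ref{prop:data based quadr stab} on any optimizer and recalling the determinant-volume relationship for ellipsoids; I would expect to write the argument in at most a few lines, simply stating that a solution of the optimization problem is feasible for \eqref{qlmi data:all}, applying the proposition, and remarking that the chosen cost maximizes the volume of the guaranteed under-approximation of the basin of attraction.
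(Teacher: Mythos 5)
Your proposal is correct and matches the paper's treatment: the paper presents this as an ``immediate'' corollary with no written proof, precisely because any solution of the optimization problem is by definition feasible for \eqref{qlmi data} and \eqref{qlmi data:I=X0 GK}, so Proposition~\ref{prop:data based quadr stab} applies directly. Your additional remarks on the volume interpretation of $\det(P)$ and convexity of $-\log\det$ are accurate but are motivation rather than part of the logical argument.
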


Finally, since we are considering a quadratic Lyapunov function and as is done in the model-based solutions \cite{Khlebnikov2018arc,tarbouriech2009control}, the very same arguments leading to Proposition~\ref{prop:data based quadr stab} yield exponential (instead of asymptotic) stability by strengthening a little the matrix inequality in~\eqref{qlmi data}. This is stated in the next corollary, whose proof is thus omitted.

\begin{corollary}
For $\mu \in (0,1)$, suppose that the assumptions of Proposition~\ref{prop:data based quadr stab}
can be satisfied after replacing the element $(1,1)$ of the matrix in~\eqref{qlmi data} (i.e., $-P$) with $- \mu P$. Then,
\begin{enumerate}[wide,labelindent=0pt,label=(\roman*)]
\item for the dynamics in~\eqref{bilinear data D} corresponding to $D$, the Lyapunov function $V(x)= x^\top Q x= x^\top (X_{0,T} Y^\top)^{-1} x$ satisfies 
\begin{equation*}
V(g_D(x) x) < \mu V(x) \quad \text{ for all } x \in \mathcal{E}_{Q}\backslash\{0\};
\end{equation*}
\item the origin is exponentially stable for \eqref{bil.sys} with controller $u=K x = U_{0,T} G_K x = U_{0,T} Y^\top (X_{0,T} Y^\top)^{-1} x$ and its basin of attraction contains the set $\mathcal{E}_{Q}$.
\end{enumerate}
\end{corollary}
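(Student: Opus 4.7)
The plan is to mimic, step by step, the chain of equivalences that gave Proposition~\ref{prop:data based quadr stab}, tracking how the replacement of $-P$ by $-\mu P$ in the $(1,1)$ block propagates through the transformations. Concretely, starting from the modified~\eqref{qlmi data} with $-\mu P$ in position $(1,1)$, I would reverse the pre- and post-multiplications by the block diagonal matrices used in the proof of Proposition~\ref{prop:data based quadr stab} (the scaling by $\tau$ with $\epsilon_1=1/\tau$, and the congruence by $\mathrm{diag}(Q^{-1},Q^{-1},I,I,I)$), and substitute back $Q=P^{-1}$, $G_K=Y^\top P^{-1}$. The net effect is to recover the matrix inequality of Lemma~\ref{lemma:mi without D}, but with the $(1,1)$ block $-Q$ replaced by $-\mu Q$.

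Next, I would apply the Petersen-type argument of Lemma~\ref{lemma:mi without D} verbatim: since $\mathcal{F}=I$ and $\|D\|\le\delta$, Fact~\ref{fact:petersen lemma} still certifies the corresponding inequality of the form of~\eqref{mi with D} with $-\mu Q$ in the $(1,1)$ block, uniformly in $D$. I would then reproduce the chain of Schur complements in the proof of Lemma~\ref{lemma:mi with D}: only the block that previously produced the term $-Q$ inside $\mathcal{R}$ is touched, so the whole derivation goes through with $\mathcal{R}$ replaced by $\mathcal{R}^{(\mu)}:=(\mathcal{A}_c+\mathcal{F}D\mathcal{H})^\top Q(\mathcal{A}_c+\mathcal{F}D\mathcal{H})-\mu Q$. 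The resulting conclusion is
\begin{equation*}
\mathcal{N}_D(x)+(1-\mu)Q\prec 0\quad\text{for all }x\in\mathcal{E}_Q,
\end{equation*}
since $\mathcal{R}^{(\mu)}=\mathcal{R}+(1-\mu)Q$ and the remaining terms in $\mathcal{N}_D(x)$ are unaffected.

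Combining this with Lemma~\ref{lemma:Lf increment} yields, for every $x\in\mathcal{E}_Q\setminus\{0\}$,
\begin{equation*}
V(g_D(x)x)-V(x)=x^\top\mathcal{N}_D(x)x<-(1-\mu)\,x^\top Q x=-(1-\mu)V(x),
\end{equation*}
which rearranges to $V(g_D(x)x)<\mu V(x)$, proving~(i). For~(ii), I would note that~(i) makes $\mathcal{E}_Q$ forward invariant: indeed $V(g_D(x)x)<\mu V(x)\le\mu<1$, so the next iterate lies in $\mathcal{E}_Q$ as well. Iterating gives $V(x(k))\le\mu^k V(x(0))$ for every initial condition $x(0)\in\mathcal{E}_Q$, and the standard sandwich $\lambda_{\min}(Q)\|x\|^2\le V(x)\le\lambda_{\max}(Q)\|x\|^2$ converts this into an exponential bound $\|x(k)\|\le\sqrt{\lambda_{\max}(Q)/\lambda_{\min}(Q)}\,\mu^{k/2}\|x(0)\|$, yielding exponential stability with basin of attraction containing $\mathcal{E}_Q$. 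The equivalence~\eqref{qlmi data:I=X0 GK}, i.e., $I=X_{0,T}G_K$, together with Lemma~\ref{lem1} ensures that~\eqref{bilinear data D} is indeed the closed-loop representation of~\eqref{bil.sys} with $u=U_{0,T}Y^\top(X_{0,T}Y^\top)^{-1}x$, closing the argument.

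There is no genuinely hard step; the only point that needs care is bookkeeping the $(1,1)$ entry through the successive Schur complements and congruences to verify that nothing other than the $-Q$ (respectively $-P$) term is modified and that strict negative definiteness is preserved throughout, so that the scalar inequality $V(g_D(x)x)<\mu V(x)$ pops out cleanly at the end.
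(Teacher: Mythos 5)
Your proposal is correct and follows exactly the route the paper intends: the paper omits the proof, stating only that it follows from ``the very same arguments'' as Proposition~\ref{prop:data based quadr stab}, and your bookkeeping of the $(1,1)$ entry through the congruences and Schur complements (yielding $\mathcal{N}_D(x)+(1-\mu)Q\prec 0$ and hence $V(g_D(x)x)<\mu V(x)$) is precisely that argument carried out in full. The forward-invariance and $\lambda_{\min}/\lambda_{\max}$ sandwich for part~(ii) are the standard completion and are also correct.
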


\begin{figure}
\centerline{\hspace*{.0cm}\includegraphics[scale=.65]{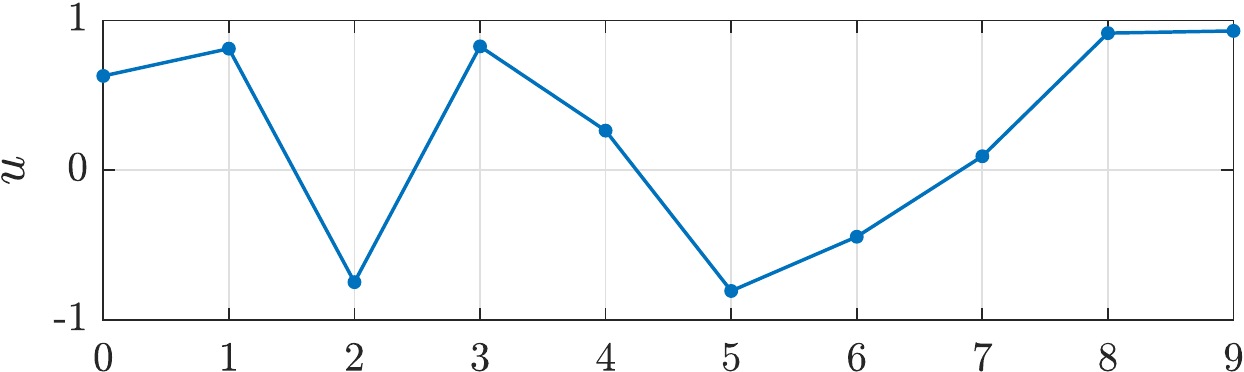}}
\centerline{\includegraphics[scale=.65]{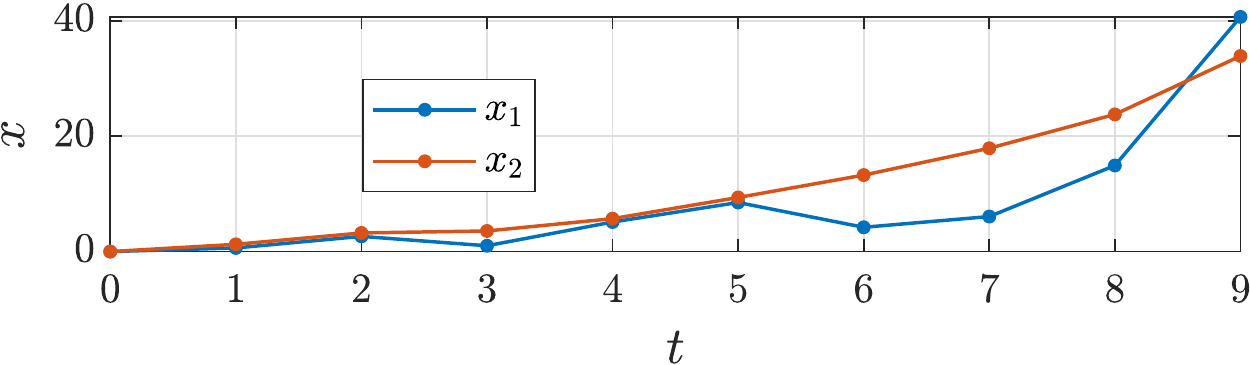}}
\caption{Input and state sequences giving the quantities in~\eqref{data}.}
\label{fig:data}
\end{figure}

\section{Numerical example}
\label{sec:numerical example}

We consider for~\eqref{bil.sys} the matrices
\begin{equation}
\label{bil.sys.matrices}
A=\begin{bmatrix}
0.8 & 0.5\\
0.4 & 1.2
\end{bmatrix},
B=\begin{bmatrix}
1\\
2
\end{bmatrix},
D=\begin{bmatrix}
0.45 & 0.45\\
0.3 & -0.3
\end{bmatrix},
\end{equation}
which are taken from~\cite[\S 5]{bitsoris2008ifac}.
Our design does \emph{not} rely on their knowledge, but simply on the data generated according to them and a bound $\delta$ of $\| D \|$. In particular, we consider $\delta=0.7637$, which overapproximates by 20\% the actual $\| D \|= 0.6364$. Moreover, we will use the matrices in~\eqref{bil.sys.matrices} to compare our data-based design with a model-based design in Section~\ref{sec:numerical example:MB}.
We note that the comparison is made with a model-based design that has perfect knowledge of the parameters of the system. Getting to perfectly know the parameters would correspond to the ideal case even for a preliminary system identification step. We show in this section that our designed controller performs comparably to such a model-based design, in spite of being tuned only on an offline experiment.

We consider $T=10$. In Figure~\ref{fig:data}, we show the input and state sequences giving \eqref{data} and generated according to the matrices in~\eqref{bil.sys.matrices}.
We note that $A$ being unstable is challenging because a suitable control action has to be designed to modify by feedback the system evolution in a neighborhood of the origin (without the ``help'' of a stable linear part) and the generated data quickly grow large as shown in Figure~\ref{fig:data}, thereby impacting the numerical accuracy of the procedure.

\subsection{Data-based solution}
\label{sec:numerical example:DB}

By using Corollary~\ref{cor:maximize}, the data-based solution implemented in this section is as follows. In particular, we opt for fixing the scalar variable $\epsilon_1$, solve an LMI, and perform a line search on $\epsilon_1$.
\begin{enumerate}[leftmargin=12pt]
\item We fix $\epsilon_1>0$.
\item We solve the next optimization problem in the decision variables $\epsilon_2 \in \real$, $Y \in \real^{n \times T}$ and $P=P^\top \in \real^{n \times n}$
\begin{equation*}
\begin{aligned}
& \text{minimize } -\log \det (P)\\
& \text{subject to }  P=X_{0,T} Y^\top,\\ 
& \begin{bmatrix}
- P						& 0						& Y U_{0,T}^\top 			& Y X_{1,T}^\top 			& -\delta Y V_{0,T}^\top \\
\star					& - \epsilon_1 P 		& 0							& 0							& \delta \epsilon_1 P\\
\star					& \star					& -\epsilon_1 I				& 0							& 0\\
\star					& \star					& \star						& - P + \epsilon_2 I		& 0\\
\star					& \star					& \star						& \star						& - \epsilon_2 I
\end{bmatrix} \prec 0,
\end{aligned}
\end{equation*}
which is an LMI. By denoting the solution $P=:P_\textup{DB}$, we then obtain $G_K= Y^\top P^{-1}_\textup{DB}$ and the controller gain as $K_\textup{DB}:= U_{0,T} G_K$.
\item We iterate on the selection of $\epsilon_1$ in case of, e.g., infeasibility.
\end{enumerate}

We implement this scheme (and the model-based one in Section~\ref{sec:numerical example:MB}) through YALMIP \cite{lofberg2004yalmip}. 
For a value of $\epsilon_1= 0.8$, we obtain 
\begin{equation*}
P_\textup{DB}\!=\!
\begin{bmatrix}
    3.2827 &   -0.9642\\
   -0.9642 &   2.4388\\
\end{bmatrix}\!,\,
K_\textup{DB}\!=\!
\begin{bmatrix}
-0.3175  & -0.5649
\end{bmatrix}.
\end{equation*}
The evolution of $x$ when $u= K_\textup{DB} x$ is used in~\eqref{bil.sys} is given in Figure~\ref{fig:DBandMBsolution} in the top plot as a phase portrait (\emph{solid} colored lines) and in the middle plot as a time evolution.

\subsection{Comparison with model-based solution}
\label{sec:numerical example:MB}

For~\eqref{bil.sys} with matrices in~\eqref{bil.sys.matrices}, we use the model-based solution in~\cite{Khlebnikov2018arc} for comparison. This model-based solution is also not an LMI, unless the scalar parameter $\epsilon_1$ is fixed (as in the data-based solution) and a line search is performed.

\begin{enumerate}[leftmargin=12pt]
\item We fix $\epsilon_1>0$.
\item We solve  the optimization problem in the decision variables $y \in \real^{n}$ and $P=P^\top \in \real^{n \times n}$
\begin{equation*}
\begin{aligned}
& \text{minimize } -\log \det (P)\\
& \text{subject to } P\succ 0\\ 
& \begin{bmatrix}
- P					& 0					& y 				& P A^\top + y B^\top\\
0					& - \epsilon_1 P 	& 0					& P D^\top\\
y^\top 				& 0					& -\epsilon_1 I		& 0\\
A P +B y^\top		& D P				& 0					& - P\\
\end{bmatrix} \prec 0,
\end{aligned}
\end{equation*}
which is an LMI.
By denoting the solution $P=:P_\textup{MB}$, we then obtain the controller gain as $K_\textup{MB}:= y^\top P^{-1}_\textup{MB}$.
\item We iterate on the selection of $\epsilon_1$ in case of, e.g., infeasibility.
\end{enumerate}

For the same value of $\epsilon_1= 0.8$ as in Section~\ref{sec:numerical example:DB}, we obtain 
\begin{equation*}
P_\textup{MB}\!=\!
\begin{bmatrix}
    8.5623 &   -4.7253\\
   -4.7253 &   6.3616\\
\end{bmatrix}\!,\,
K_\textup{MB}\!=\!
\begin{bmatrix}
-0.3572  & -0.5738
\end{bmatrix}.
\end{equation*}
The evolution of $x$ when $u= K_\textup{MB} x$ is used in~\eqref{bil.sys} is given in Figure~\ref{fig:DBandMBsolution} in the top plot as a phase portrait (\emph{dotted} colored lines) and in the bottom plot as a time evolution. 

\begin{figure}[t]
\centerline{\includegraphics[scale=.65]{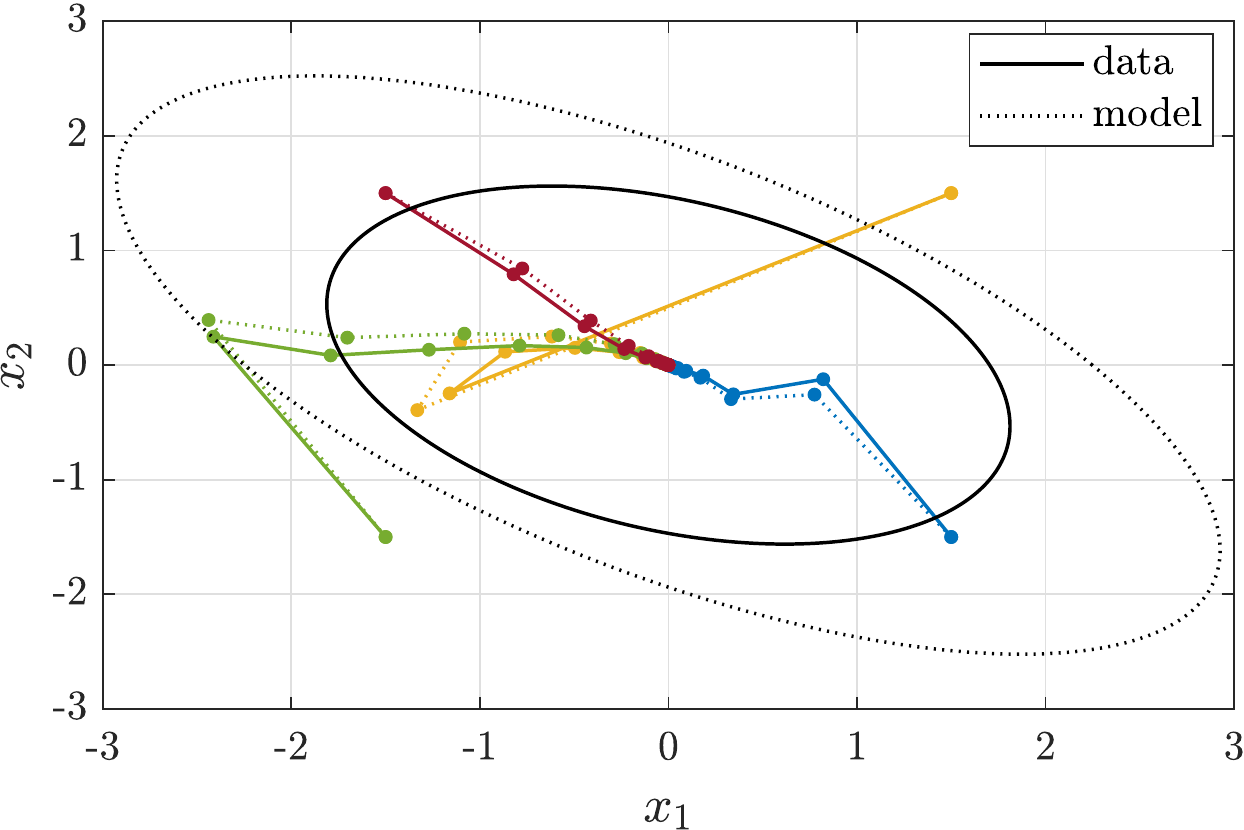}}
\centerline{\includegraphics[scale=.65]{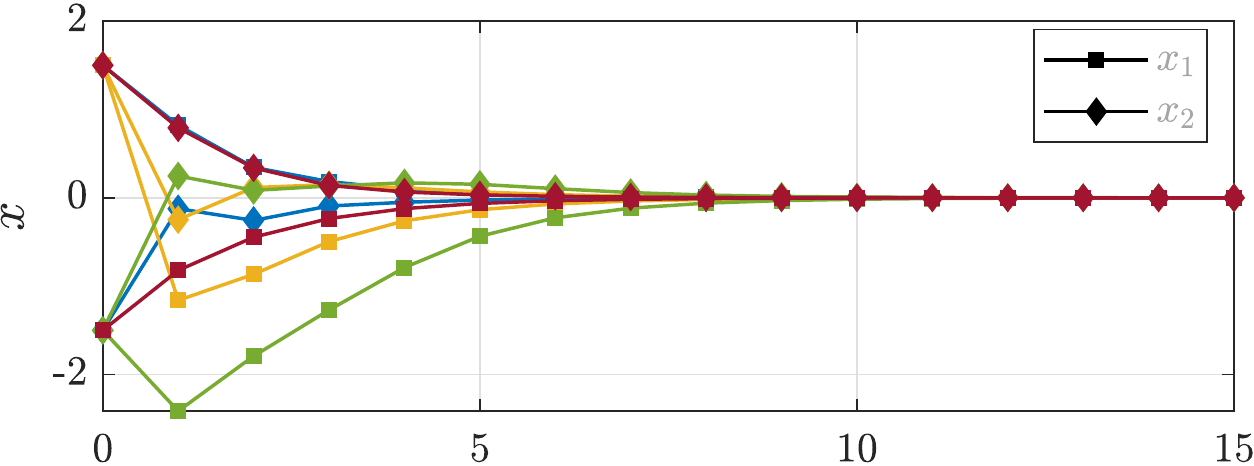}}
\centerline{\includegraphics[scale=.65]{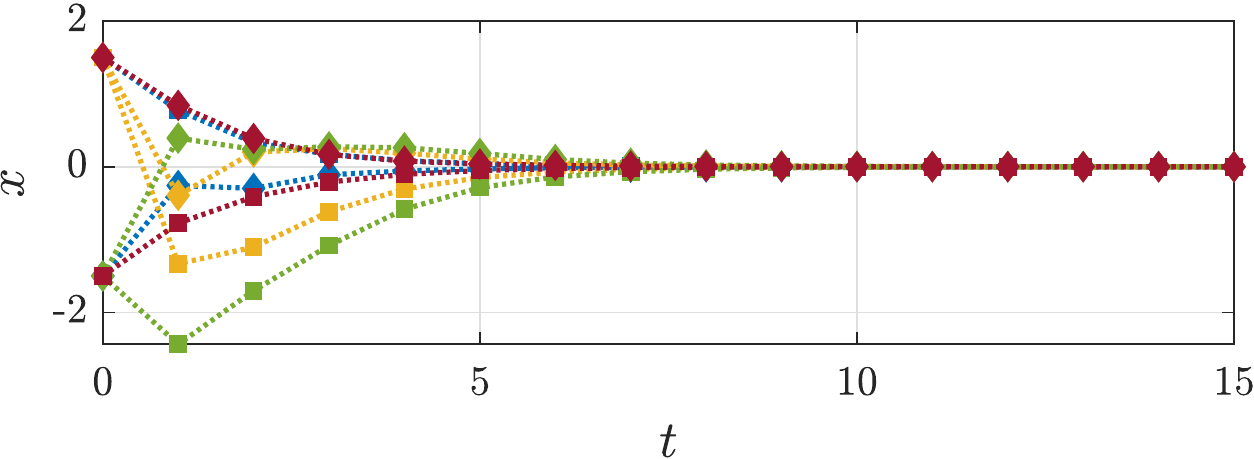}}
\caption{Evolution of the data-based and model-based solutions of Sections~\ref{sec:numerical example:DB} and \ref{sec:numerical example:MB}, corresponding to the selected value of $\epsilon_1$. The same color corresponds to solutions with the same initial condition. Solid and dotted lines correspond respectively to the data-based and model-based solution.
(Top) Phase portrait. The area within the ellipsoids is guaranteed to be in the basin of attraction of the origin, by the existence of the Lyapunov functions corresponding to the matrices $P_\textup{DB}$ and $P_\textup{MB}$.
(Middle) Time evolutions of the state $x$ for the data-based solution, where traces with squares and diamonds identify respectively the component $x_1$ and $x_2$.
(Bottom) Time evolutions of the state $x$ for the model-based solution.
}
\label{fig:DBandMBsolution}
\end{figure}

\subsection{Discussion}

Finally, we compare the performance of the data-based solution against the model-based solution by performing a thorough line search on the parameter $\epsilon_1$, which we fixed before in order to be able to solve an LMI. The result is in Figure~\ref{fig:line search}. 
Only values of $\epsilon_1$ where an optimal solution was returned by YALMIP, are displayed (in particular, this did not happen for the model-based solution with values of $\epsilon_1$ between $0.2$ and $0.4$).

The top plot represents the determinants of the matrices $P_\textup{DB}$ and $P_\textup{MB}$, which was considered since it is proportional to the volume of the ellipsoids that are guaranteed to be in the basin of attraction of the closed-loop system. 

In the middle plot, the logarithms of these determinants are also provided since they are the actual objective functions in the optimization problems of Sections~\ref{sec:numerical example:DB}-\ref{sec:numerical example:MB}. 

As expected, the model-based solution provides ellipsoids with larger sizes (e.g., $\det(P_\textup{MB})=60.03$ for $\epsilon_1=0.4$). 
For the given example, it appears from Figure~\ref{fig:line search} that the data-based solution performs better for small $\epsilon_1$, whereas it performs worse than the model-based solution for large $\epsilon_1$.
We note that $\log \det$ is actually more representative of the actual difference between the two solutions. Indeed, for values of $\epsilon_1$ around $1$, the two solutions are not so distant, as is confirmed by the illustration of Figure~\ref{fig:DBandMBsolution} where the corresponding ellipsoids are also depicted in the top plot (solid and dotted black curves).

Finally, since the model-based and data-based solution share a similar structure, we expect that they lead to similar feedback gains. This is indeed confirmed by their relative difference in norm in the bottom plot in Figure~\ref{fig:line search}.

In summary, our designed controller presents in these simulations a similar performance to the model-based design, where the former relies on an offline experiment and the latter on the perfect knowledge of system parameters.

\begin{figure}
\centerline{\includegraphics[scale=.65]{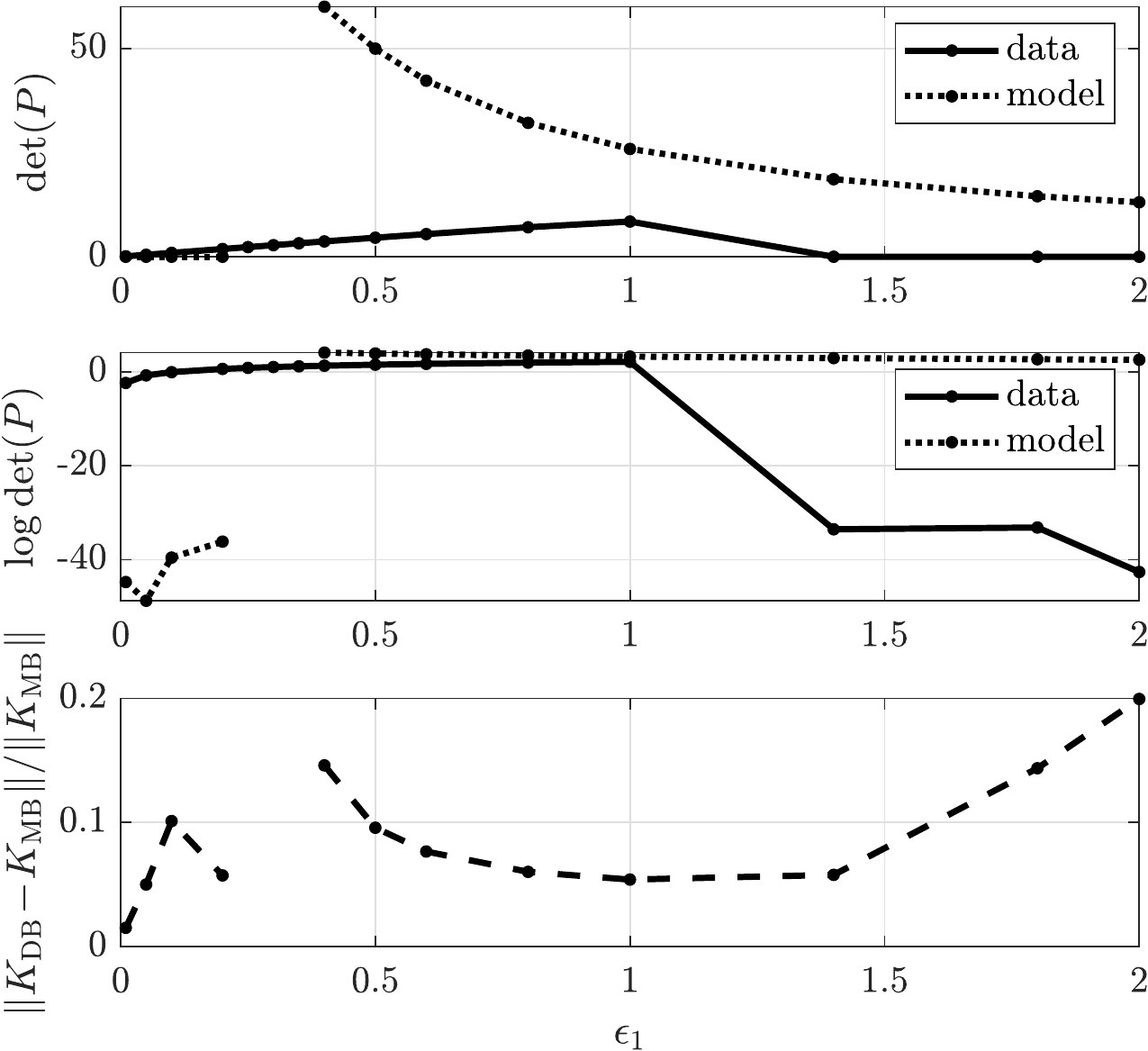}}
\caption{
Characterization of the main parameters of the data-based and model-based solution as a function of the parameter $\epsilon_1$. 
(Top) The determinants of matrices $P_\textup{DB}$ and $P_\textup{MB}$.
(Middle) Their logarithms, which are optimized in the numerical schemes of Section~\ref{sec:numerical example:DB} and \ref{sec:numerical example:MB}.
(Bottom) The relative difference in norm between the feedback gains.
}
\label{fig:line search}
\end{figure}

\section{Conclusions}

We proposed a direct data-driven design for bilinear systems, which comes with a guaranteed subset of the basin of attraction. 

The main goal of future work is applying this scheme as a building block for data-driven control of input-affine nonlinear systems (by approximating the latter through Carleman linearization). Closely related topics of future work are a study of the modifications needed to cope with noisy data,
and of the tradeoffs with schemes based on sum-of-squares programming for bilinear systems.

\bibliographystyle{plain}
\bibliography{biblio-data}

\begin{thebibliography}{10}

\bibitem{Amato200c&sII}
F.~Amato, C.~Cosentino, A.S. Fiorillo, and A.~Merola.
\newblock Stabilization of bilinear systems via linear state-feedback control.
\newblock {\em IEEE Trans. Circ. Systems II: Express Briefs}, 56(1):76--80,
  2009.

\bibitem{baggioCSL19}
G.~Baggio, V.~Katewa, and F.~Pasqualetti.
\newblock Data-driven minimum-energy controls for linear systems.
\newblock {\em IEEE Control Systems Letters}, 3(3):589--594, July 2019.

\bibitem{berberich2019robust}
J.~Berberich, A.~Romer, C.~W. Scherer, and F.~Allg{\"o}wer.
\newblock Robust data-driven state-feedback design.
\newblock {\em arXiv preprint arXiv:1909.04314}, 2019.

\bibitem{Berkenkamp2017}
F.~Berkenkamp, M.~Turchetta, A.P. Schoellig, and A.~Krause.
\newblock Safe model-based reinforcement learning with stability guarantees.
\newblock In {\em Proc. Advances in Neural Information Processing Systems 30},
  pages 908--918, 2017.

\bibitem{bitsoris2008ifac}
G.~Bitsoris and N.~Athanasopoulos.
\newblock Constrained stabilization of bilinear discrete-time systems using
  polyhedral {L}yapunov functions.
\newblock {\em IFAC Proc. Vol.}, 41(2):2502 -- 2507, 2008.
\newblock 17th IFAC World Congress.

\bibitem{boyd1994linear}
S.~Boyd, L.~El~Ghaoui, E.~Feron, and V.~Balakrishnan.
\newblock {\em Linear matrix inequalities in system and control theory},
  volume~15.
\newblock SIAM, 1994.

\bibitem{brockett1976volterra}
R.W. Brockett.
\newblock Volterra series and geometric control theory.
\newblock {\em Automatica}, 12(2):167--176, 1976.

\bibitem{bruni1974bilinear}
C.~Bruni, G.~DiPillo, and G.~Koch.
\newblock Bilinear systems: An appealing class of ``nearly linear'' systems in
  theory and applications.
\newblock {\em IEEE Trans. Autom. Control}, 19(4):334--348, 1974.

\bibitem{campi2002virtual}
M.C. Campi, A.~Lecchini, and S.M. Savaresi.
\newblock Virtual reference feedback tuning: a direct method for the design of
  feedback controllers.
\newblock {\em Automatica}, 38(8):1337--1346, 2002.

\bibitem{campi2006direct}
M.C. Campi and S.M. Savaresi.
\newblock Direct nonlinear control design: the virtual reference feedback
  tuning ({VRFT}) approach.
\newblock {\em IEEE Trans. Autom. Control}, 51(1):14--27, 2006.

\bibitem{chen2000subspace}
H.~Chen and J.~Maciejowski.
\newblock Subspace identification of deterministic bilinear systems.
\newblock In {\em Proc. Amer. Control Conf.}, pages 1797--1801, 2000.

\bibitem{Coulson2018}
J.~Coulson, J.~Lygeros, and F.~D\"{o}rfler.
\newblock Data-enabled predictive control: In the shallows of the {D}ee{PC}.
\newblock In {\em Proc. Eur. Contr. Conf.}, pages 307--312, 2019.

\bibitem{deptCDC}
C.~De~Persis and P.~Tesi.
\newblock On persistency of excitation and formulas for data-driven control.
\newblock In {\em Proc. IEEE Conf. Decis. Control}, pages 873--878, 2019.

\bibitem{deptMTNS2020}
C.~De~Persis and P.~Tesi.
\newblock Designing experiments for data-driven control of nonlinear systems.
\newblock {\em Submitted to 24th Int. Symp. Mathematical Theory of Networks and
  Systems}, 2020.

\bibitem{depersis-tesi2020tac}
C.~{De Persis} and P.~{Tesi}.
\newblock Formulas for data-driven control: Stabilization, optimality and
  robustness.
\newblock {\em IEEE Trans. Autom. Control}, 65(3):909--924, 2020.

\bibitem{favoreel1999subspace}
W.~Favoreel, B.~De~Moor, and P.~Van~Overschee.
\newblock Subspace identification of bilinear systems subject to white inputs.
\newblock {\em IEEE Trans. Autom. Control}, 44(6):1157--1165, 1999.

\bibitem{fliess2009model}
M.~Fliess.
\newblock Model-free control and intelligent {PID} controllers: Towards a
  possible trivialization of nonlinear control?
\newblock {\em IFAC Proc. Vol.}, 42(10):1531--1550, 2009.

\bibitem{Goncalves2019}
R.~{Gon\c{c}alves da Silva}, A.~Bazanella, C.~Lorenzini, and L.~Campestrini.
\newblock Data-driven {LQR} control design.
\newblock {\em IEEE Control Systems Letters}, 3(1):180--185, 2019.

\bibitem{guo20cdc}
M.~Guo, C.~{De Persis}, and P.~Tesi.
\newblock Learning control for polynomial systems using sum of squares.
\newblock {\em arXiv preprint arXiv:2004.00850}, 2020.

\bibitem{Hjalmarsson1998}
H.~Hjalmarsson, M.~Gevers, Gunnarsson S., and Lequin O.
\newblock Iterative feedback tuning: Theory and applications.
\newblock {\em IEEE Control Systems Magazine}, 18(4):26--41, 1998.

\bibitem{horn2013matrix}
R.~A. Horn and C.~R. Johnson.
\newblock {\em Matrix analysis}.
\newblock Cambridge University Press, 2nd edition, 2013.

\bibitem{Huang2019data}
L.~Huang, J.~Coulson, J.~Lygeros, and F.~D\"{o}rfler.
\newblock Data-enabled predictive control for grid-connected power converters.
\newblock In {\em Proc. IEEE Conf. Decis. Control}, pages 8130--8135, 2019.

\bibitem{karimi2004iterative}
A.~Karimi, L.~Mi{\v{s}}kovi{\'c}, and D.~Bonvin.
\newblock Iterative correlation-based controller tuning.
\newblock {\em Int. Journal Adaptive Control Signal Processing},
  18(8):645--664, 2004.

\bibitem{Khlebnikov2018arc}
M.V. Khlebnikov.
\newblock Quadratic stabilization of discrete-time bilinear systems.
\newblock {\em Autom. Remote Control}, 79(7):1222--1239, 2018.

\bibitem{krener1975bilinear}
A.J. Krener.
\newblock Bilinear and nonlinear realizations of input-output maps.
\newblock {\em SIAM Journal on Control}, 13(4):827--834, 1975.

\bibitem{lofberg2004yalmip}
J.~L{\"{o}}fberg.
\newblock {YALMIP}: A toolbox for modeling and optimization in {MATLAB}.
\newblock In {\em Proc. IEEE Int. Symp. Computer Aided Control System Design},
  pages 284--289, 2004.

\bibitem{mohler1973bilinear}
R.~R. Mohler.
\newblock {\em Bilinear control processes: with applications to engineering,
  ecology and medicine}.
\newblock Academic Press, 1973.

\bibitem{D2IBC}
C.~Novara, S.~Formentin, S.~Savaresi, and M.~Milanese.
\newblock Data-driven design of two degree-of-freedom nonlinear controllers:
  the {D$^2$-IBC} approach.
\newblock {\em Automatica}, 72:19--27, 2016.

\bibitem{petersen1987stabilization}
I.R. Petersen.
\newblock A stabilization algorithm for a class of uncertain linear systems.
\newblock {\em Systems \& Control Letters}, 8(4):351--357, 1987.

\bibitem{recht2018tour}
B.~Recht.
\newblock A tour of reinforcement learning: The view from continuous control.
\newblock {\em Annual Review of Control, Robotics, and Autonomous Systems},
  2:253--279, 2019.

\bibitem{schiffer2020data}
J.G. Rueda-Escobedo and J.~Schiffer.
\newblock Data-driven internal model control of second-order discrete volterra
  systems.
\newblock {\em arXiv preprint arXiv:2003.14158}, 2020.

\bibitem{rugh1981nonlinear}
W.~J. Rugh.
\newblock {\em Nonlinear system theory}.
\newblock Johns Hopkins University Press, 1981.
\newblock Web version prepared by the author in 2002.

\bibitem{Salvador2018}
J.~Salvador, D.~Mu\~{n}oz de~la Pe\~{n}a, T.~Alamo, and A.~Bemporad.
\newblock Data-based predictive control via direct weight optimization.
\newblock In {\em Proc. IFAC Conf. Nonlinear Model Predictive Control}, pages
  356--361, 2018.

\bibitem{sontag2009input}
E.~D. Sontag, Y.~Wang, and A.~Megretski.
\newblock Input classes for identifiability of bilinear systems.
\newblock {\em IEEE Trans. Autom. Control}, 54(2):195--207, 2009.

\bibitem{Tabuada18data-driven}
P.~Tabuada, W.-L. Ma, J.~Grizzle, and A.D. Ames.
\newblock Data-driven control for feedback linearizable single-input systems.
\newblock In {\em Proc. IEEE Conf. Decis. Control}, pages 6265--6270, 2017.

\bibitem{tarbouriech2011stability}
S.~Tarbouriech, G.~Garcia, J.M. Gomes~da Silva~Jr, and I.~Queinnec.
\newblock {\em Stability and stabilization of linear systems with saturating
  actuators}.
\newblock Springer, 2011.

\bibitem{tarbouriech2009control}
S.~Tarbouriech, I.~Queinnec, T.R. Calliero, and P.L.D. Peres.
\newblock Control design for bilinear systems with a guaranteed region of
  stability: An {LMI}-based approach.
\newblock In {\em Med. Conf. Control Automation}, pages 809--814, 2009.

\bibitem{Waarde2020}
H.~{van Waarde}, J.~Eising, H.~Trentelman, and K.~Camlibel.
\newblock Data informativity: a new perspective on data-driven analysis and
  control.
\newblock {\em IEEE Trans. Autom. Control}, 99:1--1, 2020.

\bibitem{Vatani2017ijrnc}
M.~Vatani, M.~Hovd, and S.~Olaru.
\newblock Control design for discrete-time bilinear systems using the
  scalarized {S}chur complement.
\newblock {\em Int. J. Robust Nonlinear Control}, 27(18):4492--4506, 2017.

\bibitem{verdult2002non}
V.~Verdult.
\newblock {\em Nonlinear system identification: a state-space approach}.
\newblock PhD thesis, University of Twente, 3 2002.

\bibitem{Wabersich2018ecc}
K.P. Wabersich and M.N. Zeilinger.
\newblock Scalable synthesis of safety certificates from data with application
  to learning-based control.
\newblock In {\em Proc. Eur. Control Conf.}, pages 1691--1697, 2018.

\bibitem{willems2005note}
J.C. Willems, P.~Rapisarda, I.~Markovsky, and B.L.M. De~Moor.
\newblock A note on persistency of excitation.
\newblock {\em Systems \& Control Letters}, 54(4):325--329, 2005.

\end{thebibliography}

\end{document}